\newcommand\restr[2]{{% we make the whole thing an ordinary symbol
  \left.\kern-\nulldelimiterspace % automatically resize the bar with \right
  #1 % the function
  \vphantom{\big|} % pretend it's a little taller at normal size
  \right|_{#2} % this is the delimiter
  }}
\newcommand{\eps}{\varepsilon}
\def\moverlay{\mathpalette\mov@rlay}
\def\mov@rlay#1#2{\leavevmode\vtop{%
   \baselineskip\z@skip \lineskiplimit-\maxdimen
   \ialign{\hfil$\m@th#1##$\hfil\cr#2\crcr}}}
\newcommand{\charfusion}[3][\mathord]{
    #1{\ifx#1\mathop\vphantom{#2}\fi
        \mathpalette\mov@rlay{#2\cr#3}
      }
    \ifx#1\mathop\expandafter\displaylimits\fi}
\newcommand{\poly}{\mathrm{poly}}
\DeclareMathOperator*{\rk}{rk}
\newcommand{\floor}[1]{\lfloor #1 \rfloor}
\newcommand{\size}[1]{\mathrm{size}}
\newcommand{\set}[2][ ]{\{#2 \ifthenelse{\equal{#1}{ }}{ }{~|~#1}\}}
\newcommand{\comment}[1]{}
\newcommand{\seepage}[2][See]{
    \marginnote{
        \scriptsize {#1} p.~\pageref{#2}
    }
}
\newcommand{\reuse}[1]{
	\expandafter\stepcounter{#1_help}
    \expandafter\label{#1_app}
    \csname#1\endcsname*
}
\newcommand{\ngh}{\mathsf{Ngh}}
\newcommand{\danupon}[1]{{\bf \color{green} DANUPON: #1}}
\newcommand{\sagnik}[1]{{\bf \color{blue} SAGNIK: #1}}
\newcommand{\jan}[1]{{\bf \color{red} Jan: #1}}
\newcommand{\blikstad}[1]{{\bf \color{magenta} [Joakim: #1]}}
\newcommand{\danupon}[1]{}
\newcommand{\sagnik}[1]{}
\newcommand{\jan}[1]{}
\newcommand{\blikstad}[1]{}
\newcommand{\LightEdges}{\textsc{LightEdges}}
\newcommand{\NotVisited}{\textsc{NotVisited}}
\title{Breaking the Quadratic Barrier for Matroid Intersection}
\author[1]{Joakim Blikstad\thanks{\texttt{blikstad@kth.se}}} 
\author[1]{Jan van den Brand\thanks{\texttt{janvdb@kth.se}}} 
\author[1]{Sagnik Mukhopadhyay\thanks{\texttt{sagnik@kth.se}}}
\author[1]{Danupon Nanongkai\thanks{\texttt{danupon@kth.se}}}
\author[1]{Joakim Blikstad}
\author[1]{Jan van den Brand}
\author[1]{Sagnik Mukhopadhyay}
\author[1]{Danupon Nanongkai}
\affil[1]{KTH Royal Institute of Technology, Sweden}
\affil[ ]{\tt\{blikstad,janvdb,sagnik,danupon\}@kth.se}
\date{}
\begin{document}

\begin{titlepage}
	\maketitle \pagenumbering{roman}
	%\vspace{-.7cm}
	
% V2: Edited by Danupon

\begin{abstract}

The matroid intersection problem is a fundamental problem that has been extensively studied for half a century. In the classic version of this problem, we are given two matroids $\cM_1 = (V, \cI_1)$ and $\cM_2 = (V, \cI_2)$ on a comment ground set $V$ of $n$ elements, and then we have to find the largest common independent set $S \in \cI_1 \cap \cI_2$ by making  {\em independence oracle queries}  of the form ``Is $S \in \cI_1$?'' or ``Is $S \in \cI_2$?'' for $S \subseteq V$. The goal is to minimize the number of queries. 

Beating the existing $\tO(n^2)$ bound, known as the {\em quadratic barrier}, is an open problem that captures the limits of techniques from two lines of work. The first one is the classic Cunningham's algorithm [SICOMP 1986], whose $\tO(n^2)$-query implementations were shown by CLS+ [FOCS 2019] and Nguy\~{\^e}n [2019].\footnote{More generally, these algorithms take $\tO(nr)$ queries where $r$ denotes the rank which can be as big as $n$.}  The other one is the general cutting plane method of Lee, Sidford, and Wong [FOCS 2015]. The only progress towards breaking the quadratic barrier requires either {\em approximation} algorithms or a more powerful \textit{rank oracle query} [CLS+ FOCS 2019]. No exact algorithm with $o(n^2)$ independence queries was known.

In this work, we break the quadratic barrier with a randomized algorithm guaranteeing $\tO(n^{9/5})$ independence queries with high probability, and a deterministic algorithm guaranteeing $\tO(n^{11/6})$ independence queries.
%Using the framework of Huang, Kakimura, and Kamiyama [SODA 2016], our algorithms lead to the first approximation algorithms that break the quadratic barrier in the weighted case.
Our key insight is simple and fast algorithms to solve a graph reachability problem that arose in the standard augmenting path framework [Edmonds 1968]. Combining this with previous exact and approximation algorithms leads to our results.

%n elegant connection to a graph reachability problem 

%\jan{didn't previous algorithms also work by finding augmenting paths? Doesnt that mean that the connection to reachability was already known? How does our connection differ? } \sagnik{How about: Our new insight is a simple algorithm for a graph reachability problem which has an elegant connection to the matroid intersection problem.} and a simple algorithm for this problem.

\end{abstract}

	\newpage
	\setcounter{tocdepth}{2}
	\tableofcontents
	%\newpage
	%\listoftheorems
\end{titlepage}

\newpage
\pagenumbering{arabic}

\section{Introduction} \label{sec:intro}

\paragraph{Matroid intersection.}  

The matroid intersection problem is a fundamental combinatorial optimization problem that has been studied for over half a century. A wide variety of prominent optimization problems, such as bipartite matching, finding an arborescence, finding a rainbow spanning tree, and spanning tree packing, can be modeled as matroid intersection problems \cite[Chapter 41]{schrijver2003}. Hence the matroid intersection problem is a natural avenue to study all of these problems simultaneously.

Formally, a matroid is  defined by the the tuple $\cM= (V, \cI)$ where $V$ is a finite set of size $n$, called the \textit{ground set}, and $\cI \subseteq 2^V$ is a family of subsets of $V$, known as the \textit{independent sets}, that satisfy two properties: (i) $\cI$ is \textit{downward closed}, i.e., all subsets of any set in $\cI$ are also in $\cI$, and (ii) for any two sets $A, B \in \cI$ with $|A| < |B|$, there is an element $v \in B \setminus A$ such that $A \cup \{v\} \in \cI$, i.e., $A$ can be \textit{extended} by an element in $B$. Given two such matroids $\cM_1 = (V, \cI_1)$ and $\cM_2 = (V, \cI_2)$ defined over the same ground set $V$, the matroid intersection problem asks to output the largest common independent set $S \in \cI_1 \cap\cI_2$. The size of such a set is called {\em rank} and is denoted by $r$. 

The classic version of this problem that has been studied since the 1960s assumes \textit{independence query} access to the matroids: Given a matroid $\cM$, an independence oracle takes a set $S \subseteq V$ as input and outputs a single boolean bit depending on whether $S \in \cI$ or not, i.e., it outputs 1 iff $S \in \cI$. The matroid intersection problem assumes the existence of two such independence oracles, one for each matroid. 
The goal is to design an efficient algorithm in order to minimize the number of such oracle accesses, i.e., to minimize the independence query complexity of the matroid intersection problem. This is the version of the problem that we study in this work. 
Note that a more powerful query model called {\em rank query} has been recently studied in \cite{lee2015faster, chakrabarty2019faster}. We do {\em not} consider such model.

%\danupon{WEIGHTED: The weighted case was also considered, and We discuss this case after we state our main result.}
%will discuss the weighted variant of the problem after we state our main result; the discussion belo

%However, it is prudent to mention here a more powerful model of accessing the matroid, namely the \textit{rank query model}, that has also been studied due to its connection with the submodular function minimization \cite{lee2015faster, chakrabarty2019faster}. In this work, we \textit{do not consider} this model of query access. Henceforth, by query access, we exclusively mean independence query access.

% Matroid intersection is a classic combinatorial optimization problem since a number of interesting and fundamental problems, such as bipartite matching, finding arborescence in a directed graph, finding a rainbow spanning tree, spanning tree packing, can be modeled as matroid intersection problems \cite[see][Chapter 41]{schrijver2003}.  Naturally, a lot of attention has been placed on efficiently solving the matroid intersection problem in the last few decades.

% \jan{Are there examples where independence access is much faster than rank access? That might motivate why we would want an algorithm with a few independence queries.}
\vspace{-7pt}
\paragraph{Previous work.} Starting with the work of Edmonds in the 1960s,
algorithms with polynomial query complexity for matroid intersection have been studied
\cite{edmonds1968matroid,edmonds1970submodular, aignerD, Lawler75, edmonds1979matroid,cunningham1986improved, lee2015faster,nguyen2019note,chakrabarty2019faster}.
In 1986, Cunningham \cite{cunningham1986improved} designed an algorithm with query complexity $O(nr^{1.5})$ based on the ``blocking flow'' ideas similar to Hopcroft-Karp's bipartite-matching algorithm or Dinic's maximum flow algorithm. This was the best query algorithm for the matroid intersection problem for close to three decades until the recent works of Nguy\~{\^e}n~\cite{nguyen2019note} and Chakrabarty-Lee-Sidford-Singla-Wong ~\cite{chakrabarty2019faster} who independently showed that Cunningham's algorithm
can be implemented using only $\tO(nr)$ independence queries. In a separate line of work, Lee-Sidford-Wong \cite{lee2015faster} proposed a cutting plane algorithm using $\tO(n^2)$ independence queries. When $r$ is sublinear in $n$, the result of \cite{chakrabarty2019faster, nguyen2019note} provides faster (subquadratic) algorithm than that of \cite{lee2015faster}, but for linear $r$ (i.e., $r \approx n$), all of these results are stuck at query complexity of $\tO(n^2)$. This is known as the \textit{quadratic barrier} \cite{chakrabarty2019faster}. A natural question is whether this barrier can be broken \cite[Conjecture 13]{lee2015faster}. 

%This immediately gives rise to the question of whether this barrier can be broken (or whether $\tilde \Theta(n^2)$ is the true complexity of matroid intersection) \cite[see, for example,][Conjecture 13]{lee2015faster}}. 

The only previous progress towards breaking this barrier is by \cite{chakrabarty2019faster} and falls under the following two categories. Either we need to assume the more powerful {\em rank} oracle model where \cite{chakrabarty2019faster} provides a $\tO(n^{1.5})$-time algorithm. Or, we solve an {\em approximate} version of the matroid intersection problem, where \cite{chakrabarty2019faster} provides an algorithm with $\tO(n^{1.5}/\eps^{1.5})$ complexity for $(1-\eps)$-approximately solving the matroid intersection problem in the independence oracle model. Breaking the quadratic barrier with an {\em exact} algorithm in the {\em independence} query model remains open.

%matroid intersection problem for all values of $r$ in the independence query model is an important open problem.

% \sagnik{Not sure this is needed: We can also compare with the rank query model. Note that
% a rank query can simulate an independence query, but not the other way around.
% Chakrabarty et al.~\cite{chakrabarty2019faster} presents an algorithm
% using $\tO(n\sqrt{r})$ rank queries.}\footnote{In fact, this algorithm only
% need to access one of the matroids by a rank oracle, for the other matroid, an 
% independence oracle suffices.}
\vspace{-7pt}
\paragraph{Our results.} 
We break the quadratic barrier with both deterministic and randomized algorithms:  
%We can solve the problem deterministically with $\tO(n^{11/6})$ q

%we present two algorithms---one deterministic and one randomized---both of which attain subquadratic query complexity where the randomized algorithm is polynomially more efficient than the deterministic one. More concretely, we prove the following theorem.

\begin{theorem}[Details in \cref{thm:mi-rand,thm:mi-det}]
\label{thm:mi-overview}
Matroid Intersection can be solved by
\begin{itemize}[noitemsep]
\item a deterministic algorithm taking $\tO(n^{11/6})$ independence queries, and
\item a randomized (Las Vegas) algorithm taking $\tO(n^{9/5})$ independence queries with high probability.
\end{itemize}
\end{theorem}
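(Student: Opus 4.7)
The plan is to combine the existing $(1-\eps)$-approximation algorithm of \cite{chakrabarty2019faster} with a new, query-efficient subroutine for the underlying graph reachability problem on the exchange graph, and then balance parameters.

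First, I would set up Edmonds' augmenting-path framework: starting from $S = \emptyset$, I repeatedly find an augmenting path $P$ in the exchange graph $G_S$ with respect to the current common independent set $S$, and update $S \leftarrow S \triangle P$. Each augmentation increases $|S|$ by one, so exactly $r$ augmentations are needed. The cost of the whole algorithm is therefore the sum of the per-augmentation reachability costs, and the goal is to keep this sum subquadratic even when $r = \Theta(n)$. The naive BFS on $G_S$ spends one independence query per potential edge and costs $\tO(n^2)$ per augmentation, which is far too slow.

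Next, I would split the computation into two phases. In the \emph{approximation phase}, I run the CLS+ $(1-\eps)$-approximation algorithm, which produces a common independent set $S_0$ of size at least $(1-\eps)r$ using $\tO(n^{1.5}/\eps^{1.5})$ independence queries. In the \emph{correction phase}, starting from $S_0$, I perform the remaining at most $\eps r \le \eps n$ exact augmentations, each via a new reachability subroutine that finds one augmenting path in $G_S$ using substantially fewer than $\tO(n^{1.5})$ queries. The two different bounds in the theorem arise from two different versions of this subroutine: a randomized one (with $\tO(n)$ amortized queries per augmentation) and a deterministic one (somewhat slower). Balancing the two phase costs: for the randomized algorithm, setting $n^{1.5}/\eps^{1.5} = \eps n^{2}$ gives $\eps = n^{-1/5}$ and a total of $\tO(n^{9/5})$; for the deterministic algorithm, the worse per-augmentation cost forces $\eps \approx n^{-1/6}$, yielding $\tO(n^{11/6})$.

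The main obstacle is the reachability subroutine itself. The exchange graph has $n$ vertices and up to $\Theta(n^{2})$ potential edges, and each edge is certified by an independence query, so any algorithm that examines edges one by one is doomed. The key idea I would pursue is to exploit the structural fact that an exchange-graph edge corresponds to a single-element swap between $S$ and $V \setminus S$ whose validity is governed by a matroid independence test — and hence batches of potential out-neighbours of a BFS frontier can be certified or ruled out simultaneously via binary search on carefully chosen superset/subset queries, in the spirit of the $\tO(n^{1.5})$ rank-oracle algorithm but adapted to the independence-oracle regime. The delicate part of the analysis will be showing that, after the approximation phase, the structure of $G_{S_0}$ (in particular that shortest augmenting paths are already $\Omega(1/\eps)$ long) makes the savings of this batched exploration add up across the $\eps n$ remaining augmentations to a genuinely subquadratic total; this is where randomness is used to get the sharper $n^{9/5}$ bound and is replaced by a more conservative deterministic strategy for the $n^{11/6}$ bound.
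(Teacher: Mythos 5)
Your high-level plan is on the right track — run the CLS+ $(1-\eps)$-approximation first, then switch to exact augmentation via a fast reachability subroutine, and balance $\eps$ — but there is a quantitative gap that breaks the two-phase version of this plan, and a missing middle ingredient that the paper uses to close it.

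The gap is in your claim that a single call to the reachability subroutine costs ``$\tO(n)$ amortized.'' The reachability algorithms in the paper cost $\tO(n^{3/2})$ (randomized) and $\tO(n^{5/3})$ (deterministic) \emph{per invocation}, and there is no amortization across invocations: after each augmentation the exchange graph changes and the subroutine starts fresh (the paper even remarks that its reachability algorithm cannot even be used to recover the full set of $s$-reachable vertices in subquadratic time). With the true per-call cost $\mathcal{T}$, your two-phase balance gives total cost $\tO(n^{1.5}/\eps^{1.5} + \eps n \mathcal{T})$, optimized at $\eps = (n^{1/2}/\mathcal{T})^{2/5}$ for a total of $\tO(n^{6/5}\mathcal{T}^{3/5})$. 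Plugging in $\mathcal{T} = n^{3/2}$ yields $n^{21/10}$, and $\mathcal{T} = n^{5/3}$ yields $n^{11/5}$ — both \emph{worse} than quadratic. The arithmetic you wrote ($n^{1.5}/\eps^{1.5} = \eps n^2 \Rightarrow \eps = n^{-1/5}$) is only correct if the correction phase is $n$ per augmentation, which is not achieved.

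What the paper does differently is insert a third, middle stage: after the approximation step, it continues with a query-efficient implementation of Cunningham's blocking-flow algorithm (from \cite{chakrabarty2019faster}), which finds augmenting paths in $\tO(n)$ each \emph{as long as the $s$-$t$ distance layers are already computed}, and pays only $\tO(nd)$ in total for distance computations because distances are monotone nondecreasing. It runs this stage until the $s$-$t$ distance exceeds a threshold $d$, at which point $|S| \ge r - O(r/d)$ by a standard distance/approximation claim; only then does it fall back to the per-path reachability subroutine. Choosing $d = \sqrt{\mathcal{T}}$ makes the middle stage cost $\tO(n^{9/5} + n\sqrt{\mathcal{T}})$ and the final stage cost $\tO((r/d)\cdot\mathcal{T}) = \tO(n\sqrt{\mathcal{T}})$, so the overall bound is $\tO(n^{9/5} + n\sqrt{\mathcal{T}})$, which gives $n^{9/5}$ and $n^{11/6}$ for the two values of $\mathcal{T}$. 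Without this Cunningham stage, the few-hundred-thousand remaining paths are too expensive to find one at a time. You allude to the $\Omega(1/\eps)$ distance bound, but use it backwards: the point is not that long paths make your batched exploration cheaper (they do not), but that you should do cheap blocking-flow augmentations while paths are still \emph{short} and reserve the expensive per-path reachability for when there are only $O(n/\sqrt{\mathcal{T}})$ augmentations left.

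Separately, the reachability subroutine itself is the main technical novelty of the paper and your sketch of it (``batches of potential out-neighbours of a BFS frontier certified via binary search, in the spirit of the rank-oracle algorithm'') misses the actual mechanism. The asymmetry of the independence oracle means you can binary-search neighbours only from the $\bar S$ side; a naive BFS is therefore stuck. The paper's algorithm partitions $\bar S$ into \emph{heavy} vertices (large out-degree into $S \setminus F$, or an edge to $t$) and \emph{light} vertices, keeps all out-edges of light vertices explicitly, and runs a \emph{reverse BFS from the heavy vertices} to find a path from the current reachable set $F$ to some heavy vertex; hitting a heavy vertex either reveals the $(s,t)$-path or grows $F$ by at least $h$, so there are $O(n/h)$ phases. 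The randomized/deterministic split comes entirely from how the heavy/light categorization is maintained (random sampling vs.\ a weight-balanced greedy choice of tracked neighbours), not from two different BFS strategies. You would need to supply some concrete scheme of this kind to make the reachability cost claim stand.
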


By high probability, we mean probability of at least $1 - 1/n^c$ for an arbitrarily large constant~$c$. While we only focus on the query complexity in this paper, we note that the time complexities of our algorithms
are dominated by the independence oracle queries.
That is, our deterministic and randomized algorithms have time complexity $\tO(n^{11/6}\mathcal{T}_{\text{ind}})$ and $\tO(n^{9/5}\mathcal{T}_{\text{ind}})$ respectively, where
$\mathcal{T}_{\text{ind}}$ denotes the maximum time taken by an oracle to answer an independence query.

\paragraph{Technical overview.}
Below we explain the key insights of our algorithms which are fast algorithms to solve a graph problem called {\em reachability} and a simple way to combine our algorithms with the existing exact and approximation algorithms to break the quadratic barrier.

% At the heart of this work is simple and fast algorithms to solve a graph problem in a certain query model, which is called {\em reachability} in this paper. 
% \danupon{New-->}This problem arose in algorithms that are based on finding augmenting paths \cite[e.g.][]{edmonds1968matroid,Lawler75,aignerD,cunningham1986improved, chakrabarty2019faster, nguyen2019note}. 
% %standard augmenting path framework \cite[e.g.][]{edmonds1968matroid}. 
% We then combine this algorithm with the existing exact and approximation algorithms of \cite{chakrabarty2019faster} to obtain our main results. 

%Below we describe this problem, and show how this problem is related to the matroid intersection problem. 

%, we consider a graph-theoretic problem, called \textit{reachability}, in a restricted query model. In what follows, we provide a brief description of such reachability problem, and show how this problem is related to the matroid intersection problem.\danupon{Do we need the latter since it's not our contribution? At least we should say something} 
%in hand, i.e., the matroid-intersection problem in the query model.

\medskip\noindent{\em Reachability problem:} In this problem, there is a directed bipartite graph $G$ on $n$ vertices with
bi-partition $(S\cup \{s,t\}, \bar{S})$.
We want to determine whether a directed $(s,t)$-path exists
in $G$. We know the vertices of $G$, but not the edge set $E$ of $G$. We are allowed to ask the following two types of {\em neighborhood queries}: 

\begin{enumerate}[noitemsep]
\item \label{itm:query-1} Out-neighbor query: Given $v\in \bar{S}$ and $X\subseteq S\cup \{s,t\}$, does there exists
an edge from $v$ to some vertex in $X$?
\item \label{itm:query-2} In-neighbor query: Given $v\in \bar{S}$ and $X\subseteq S\cup \{s,t\}$, does there exists
an edge from some vertex in $X$ to $v$?
\end{enumerate}

% We view this as an OR-query from the following viewpoint. Consider a vertex $v \in \bar S$. A type \ref{itm:query-1}  query of the form $(v, X)$ can be viewed as $\bigvee_{u \in X}z_{ux}$ where $z_{ux}$ is 1 iff there is an edge from $v$ to $u$. Similar statement can be made for type \ref{itm:query-2} queries. 

In other words, we can ask an oracle if a ``right vertex'' $v\in \bar S$ has an edge to or from a set $X$ of ``left vertices''. This problem arose as a subroutine of previous matroid intersection algorithms that are based on finding  augmenting paths \cite{aignerD,Lawler75,cunningham1986improved,chakrabarty2019faster,nguyen2019note}. Naively, we can solve this problem with quadratic ($O(n^2)$) queries: find all edges of $G$ by making a query for all possible $O(n^2)$ pairs of vertices. Cunningham \cite{cunningham1986improved} used this algorithm in his framework to solve the matroid intersection problem with $O(nr^{1.5})$ queries. Recent results by \cite{chakrabarty2019faster,nguyen2019note} solved the reachability problem with $\tO(nd)$ queries, where $d$ is the distance between $s$ and $t$ in $G$, essentially by simulating the breadth-first search process. Plugging these algorithms into Cunningham's framework leads to algorithms for the matroid intersection with $\tO(nr)$ queries. When $d$ is large, the algorithms of \cite{chakrabarty2019faster,nguyen2019note} still need $\tilde\Theta(n^2)$ queries to solve the reachability problem. It is not clear how to solve this problem with a subquadratic number of queries. The key component of our algorithms is subquadratic-query algorithms for the reachability problem:

%
% An easy algorithm using a quadratic ($O(n^2)$) number of queries exists:
% \emph{find all edges of $G$ by asking a query for each of the %$2(|S|+2)|\bar{S}|$
% $O(|S|\cdot |\bar{S}|)=O(n^2)$
% possible edges}.\danupon{Is this the way Cunningham use this?}\blikstad{Yes.}
% It is not very clear on how to do better, due to the asymmetry in the queries. \footnote{For contrast, if we are allowed to issue symmetric queries to the matroid oracles, i.e, in and out-neighbor queries from any vertex $v \in \bar S$ as well as any $u \in S$, then we can solve this by using only $\tO(n)$ queries. This requires a simple breadth-first search starting from $s$ where we discover neighbors using binary search.}\danupon{Footnote explaining what can be done for symmetric case?} Even if we try to be a bit more clever and do a binary search using the neighborhood queries to find out all edges, the number of queries is still $\tO(m)$ (the number of edges) which can be as large as $\tO(|S| \cdot |\bar S|)$. 
% %For the case when $|S| \approx |\bar S|$, this number of $\Theta(n^2)$.
% Our first insight is 
% %subquadratic algorithms.
% algorithms that solve this reachability problem in a subquadratic number of queries.

%\danupon{Algorithms based on augmenting path \cite{Lawler75,aignerD} Cunningham: $n^2$ ... $nr^2$, With different use ... $nr^{1.5}$. CLS+ $nd$ ... $nr$. }

\begin{theorem} [Details in \cref{thm:aug,thm:aug-det}]
\label{thm:aug-overview}
The reachability problem can be solved by
\begin{itemize}[noitemsep]
\item a deterministic algorithm that takes  $\tO(n^{5/3})$ queries, and
\item a randomized (Las Vegas) algorithm that takes $\tO(n\sqrt{n})$ queries with high probability.
\end{itemize}
\end{theorem}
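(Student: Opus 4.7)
The starting point is the BFS-simulation of \cite{chakrabarty2019faster,nguyen2019note}, which solves the reachability problem using $\tO(nd)$ queries where $d$ is the $(s,t)$-distance in $G$. This already suffices when $d \leq \sqrt{n}$ (for the randomized bound) and when $d \leq n^{2/3}$ (for the deterministic bound). The main content is therefore a complementary algorithm that is efficient when $d$ is large; the two algorithms run in parallel and we stop as soon as one terminates.

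For the randomized algorithm, I plan to set the threshold $\tau = \sqrt{n}$ and simulate bidirectional BFS: forward from $s$ and backward from $t$, each to depth $\tau$. The total cost is $\tO(n\tau)=\tO(n^{3/2})$, and if the two explored sets meet we are done. Otherwise $d > 2\tau$, so by a pigeonhole argument over the $\tau+1$ layers of each BFS there is a separator layer of size at most $n/\tau=\sqrt{n}$. Sampling a uniformly random subset $T\subseteq V$ of size $\Theta(\sqrt{n}\log n)$ gives, with high probability, a hitting set against every $\tau$-long segment of every $(s,t)$-path. Equivalently, $s$ reaches $t$ in $G$ if and only if $s$ reaches $t$ in the auxiliary graph $H$ on $T\cup\{s,t\}$ whose edge $(u,v)$ is present iff $v$ is reachable from $u$ within $\tau$ steps of $G$. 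The plan then reduces to building $H$ with $\tO(n^{3/2})$ queries and deciding reachability inside $H$ at no further oracle cost.

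For the deterministic algorithm, I would use the same scheme with threshold $\tau=n^{2/3}$, so that bidirectional BFS costs $\tO(n^{5/3})$ and the separator layer has size at most $n^{1/3}$. The probabilistic hitting set is replaced by a deterministic recursion: contract the explored balls around $s$ and $t$, treat the two small separator layers as the new source/target sets, and recurse on the residual instance. Since each recursive call strictly removes the two fully-explored balls while the new source/target layers are small, an amortized charging argument bounds the total query complexity by $\tO(n^{5/3})$.

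The principal obstacle I anticipate is the multi-source BFS used either to construct $H$ (randomized case) or to drive the recursion (deterministic case). A naive per-source implementation pays $|T|\cdot\tO(n\tau)$, which is way too much. The key will be to batch oracle queries across sources: a single in-neighbor query per unvisited $v\in\bar S$, taking as its set argument the union of all sources' current frontiers in $S\cup\{s,t\}$, determines whether $v$ has just joined some source's BFS; a subsequent binary search inside that frontier recovers which source $v$ belongs to. This would keep the per-step cost of multi-source BFS at $\tO(n)$ independent of $|T|$, matching the single-source BFS and giving the stated query bounds.
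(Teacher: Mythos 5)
Your approach is genuinely different from the paper's, and that difference is worth spelling out before evaluating correctness. The paper does not do bidirectional BFS, hitting sets, or multi-source BFS at all. Instead it grows a set $F$ of vertices known reachable from $s$, keeps $F$ closed under out-edges of $F\cap\bar S$, and partitions $\bar S\setminus F$ into \emph{heavy} vertices (out-degree $\geq h$ into $S\setminus F$, or an edge to $t$) and \emph{light} vertices. The key structural trick is to do a \emph{reverse} BFS from the heavy vertices toward $F$: since only light vertices can be internal to that reverse search, all the $\bar S\to S$ edges encountered are already known, and the $S\to\bar S$ edges are discovered cheaply via in-neighbor binary search on $\bar S$. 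Each successful phase grows $F$ by at least $h$ elements, and the deterministic bound comes from a careful charging scheme for maintaining $h$ witness out-neighbors $N_v$ per vertex, always replenishing $N_v$ with the lowest-weight available neighbor. This directly exploits the $(S,\bar S)$ asymmetry rather than fighting it.

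For the randomized half, your hitting-set plan is plausible and could in principle recover the $\tO(n^{3/2})$ bound, but the accounting you give for the multi-source BFS is not right as stated. The per-level ``check'' cost across all $v\in\bar S$ is indeed $\tO(n)$, but recovering which sources discovered $v$ is not a single binary search: a vertex can be newly reached by many sources at the same level, and each new (source, vertex) incidence needs its own $O(\log n)$-query discovery. The correct bound is therefore $\tO(\tau n + kn)$ rather than $\tO(\tau n)$; it happens that with $\tau=\sqrt n$ and $k=\tO(\sqrt n)$ both terms are $\tO(n^{3/2})$, so the numbers work out, but the analysis as written (``per-step cost $\tO(n)$ independent of $|T|$'') obscures the actual accounting and would not survive scrutiny without being redone. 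You would also need to handle the parity issue cleanly: sources in $T\cap S$ alternate $S,\bar S,S,\dots$ while sources in $T\cap\bar S$ alternate $\bar S,S,\bar S,\dots$, and the batched in-neighbor query can only be taken over the sub-union of frontiers currently sitting in $S\cup\{s,t\}$.

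The deterministic half has a genuine gap. Your recursion contracts the two explored balls, whose total size is at least $2\tau$, and recurses on the residual. That yields at most $n/(2\tau)$ recursive calls, each running a bidirectional BFS to depth $\tau$ at cost $\tO(n'\tau)$ in the residual instance; summing gives $\tO(\tau\cdot\sum_i n_i)=\tO(\tau\cdot n^2/\tau)=\tO(n^2)$, not $\tO(n^{5/3})$, regardless of how small the separator layers are. There is no amortization available along the lines you sketch, because the dominant cost of one bidirectional BFS --- scanning all of $\bar S$ once per level --- does not shrink in proportion to the vertices you retire. The paper's deterministic $\tO(n^{5/3})$ needs a fundamentally different idea (its weight-charging argument on the witness sets $N_v$ in Section~\ref{sec:det}); ``small separator plus recursion'' does not get there on its own.
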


%We outline the proof idea of \Cref{thm:aug-overview} in \Cref{sec:intro:aug-overview}. 
Plugging \Cref{thm:aug-overview} into standard frameworks such as Cunningham's does not directly lead us to a subquadratic-query algorithm for matroid intersection. Our second insight is a simple way to combine algorithms for the reachability problem with the exact and approximation algorithms of \cite{chakrabarty2019faster} to achieve the following theorem. 

%\paragraph{Connection to Matroid-Intersection. }

% \begin{theorem}
%   If the query complexity of Reachability is $\cal T$, then complexity is $O(n^{6/5}r^{3/5}\log^{4/5} r + \sqrt{r{\cal T}\log r/n})$
% \end{theorem}

\begin{theorem}[Details in \cref{thm:mi-given-aug}]\label{thm:intro:connection}
If there is an algorithm $\cal A$ that solves the reachability problem with ${\cal T}$ queries, then there is an algorithm $\cal B$ that solves the matroid intersection problem with $\tO(n^{9/5} + n\sqrt{\mathcal{T}})$ independence queries. If ${\cal A}$ is deterministic, then ${\cal B}$ is also deterministic. 
% 
%If the query complexity of the \emph{reachability problem} is $\cal T$, then matroid intersection can be solved using $\tO(n^{9/5} + n\sqrt{\mathcal{T}})$ independence queries.
\end{theorem}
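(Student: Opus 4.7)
The plan is to combine the $(1-\eps)$-approximation algorithm of CLS+ with a finishing augmentation loop driven by $\cal A$, balancing the two costs against each other. First, I would invoke the CLS+ approximation with parameter $\eps = n^{-1/5}$, obtaining a common independent set $S_0$ of size at least $(1-\eps)r$ in $\tO(n^{1.5}/\eps^{1.5}) = \tO(n^{9/5})$ independence queries. A key structural by-product of this blocking-flow-based approximation is that the shortest augmenting $(s,t)$-path in the exchange graph $D_{S_0}$ has length at least $1/\eps = n^{1/5}$, and hence the remaining gap to the optimum is at most $\eps r \le n^{4/5}$, concentrated in ``long'' augmenting paths.

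The next step closes this remaining gap by repeatedly invoking $\cal A$. For the current common independent set $S$, the question ``does an $(s,t)$-path exist in $D_S$?'' is exactly the reachability problem of \cref{thm:aug-overview}, so $\cal A$ answers it in $\cal T$ queries. Although $\cal A$ returns only a single bit, an actual augmenting path can be extracted by standard self-reduction (binary search on subsets of the right-vertex side of the bipartition) with only $\tO(1)$ multiplicative overhead in $\cal A$-calls. Processing these augmentations in blocking-flow phases (grouped by path length) and using a Hopcroft--Karp-style potential argument together with the guarantee that all remaining paths have length $\ge n^{1/5}$ (so paths in a phase are long and vertex-disjoint), one can bound the total number of $\cal A$-invocations by $\tO(n/\sqrt{\cal T})$. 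This contributes $\tO(n\sqrt{\cal T})$ independence queries to this phase.

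Summing the two phases gives $\tO(n^{9/5} + n\sqrt{\cal T})$ queries in total, as required. Determinism is inherited: CLS+'s approximation is deterministic and the blocking-flow augmentation loop is deterministic given $\cal A$, so if $\cal A$ is deterministic then so is $\cal B$. The main obstacle is the amortized accounting in the second step: one has to show that the single-bit answers of $\cal A$ suffice to discover and serve enough augmentations per invocation to match the blocking-flow potential gained in each phase to the per-call reachability cost $\cal T$, rather than to the $\tO(nd)$ per-phase cost that shows up in CLS+'s own exact algorithm.
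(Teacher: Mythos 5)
Your proposal gets the first step right and correctly identifies what makes the problem hard, but the part you flag as ``the main obstacle'' is indeed where the argument breaks, and it is not just an accounting nuisance --- it is a genuine gap. After running the CLS+ approximation with $\eps = n^{-1/5}$, the shortest $(s,t)$-distance in the exchange graph is $\ge n^{1/5}$, and hence by \cref{clm:dists} the number of remaining augmentations is $O(r/n^{1/5}) = O(n^{4/5})$ (for $r = \Theta(n)$), \emph{not} $\tO(n/\sqrt{\cal T})$. The algorithm $\cal A$ solves one reachability instance and (after the paper's chordless post-processing) yields one augmenting path; after augmenting, the exchange graph changes and you must call $\cal A$ again. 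Grouping into phases by path length does not reduce the number of $\cal A$-invocations below the number of augmentations, and matroid intersection does not admit free parallel augmentation along vertex-disjoint shortest paths the way bipartite matching does (this is precisely the point the paper makes when motivating augmenting sets). So your second step would cost $\tO(n^{4/5} \cdot {\cal T})$; with ${\cal T} = \tO(n^{3/2})$ that is $\tO(n^{2.3})$, super-quadratic, so the proposal does not establish the theorem.

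What the paper does differently --- and this is the missing idea --- is to insert an intermediate stage between the approximation and the $\cal A$-driven finishing loop: starting from the $(1-\eps)$-approximate set, run the CLS+ implementation of Cunningham's blocking-flow algorithm until the $(s,t)$-distance exceeds $d = \sqrt{\cal T}$ (up to a logarithmic factor). Because Cunningham's algorithm pays $\tO(n)$ per distance level for the BFS layering plus $\tO(n)$ per augmenting path, this stage costs $\tO(nd + n(r - |S_0|)) = \tO(n\sqrt{\cal T} + n^{9/5})$, and it is exactly this stage that cheaply reduces the number of remaining augmentations from $O(n^{4/5})$ down to $O(r/d) = O(n/\sqrt{\cal T})$. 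Only then are the final $O(n/\sqrt{\cal T})$ augmentations handled one at a time with $\cal A$, at total cost $\tO(n\sqrt{\cal T})$. The balancing $d = \sqrt{\cal T}$ (more precisely $d = \sqrt{r{\cal T}/(n\log r)}$ in the paper's \cref{thm:mi-given-aug}) is what equates the Cunningham BFS cost with the finishing cost. Without this middle stage, there is no way to charge the $\cal A$-calls against a long enough distance, and the bound collapses.

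One smaller point: your ``self-reduction from a bit to a path'' detour is unnecessary here; the reachability subroutine as used in the paper (\cref{alg:augmentation}) already produces an $(s,t)$-path, which the post-processing step makes chordless in $\tO(r)$ queries. The determinism claim you make is fine once the correct three-stage structure is in place, since the approximation algorithm, the Cunningham implementation, and the post-processing are all deterministic.
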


\Cref{thm:aug-overview,thm:intro:connection} immediately lead to \Cref{thm:mi-overview}. 
We provide proof ideas of \Cref{thm:aug-overview,thm:intro:connection} in the subsections below.

\subsection{Proof idea for Theorem~\ref{thm:aug-overview}: Algorithm for the reachability problem}\label{sec:intro:aug-overview}
Before mentioning an overview of the algorithm for solving the reachability problem, we briefly mention what makes this problem hard. Note that if we discover that some $v\in \bar{S}$ is reachable from $s$, we can find
all out-neighbors of $v$ in $(S\cup\{s,t\})$ in $O(\log n)$ queries per such neighbor. We do this by using a binary search with out-neighbor queries,
halving the size of the set of potential out-neighbors of $v$ in each step. However, when we discover that some $v\in S$ is reachable from $s$, we cannot use the same binary-search trick to efficiently find the out-neighbors of $v$ due to the {\em asymmetry} of the allowed queries, where we can make queries only for vertices $v\in \bar{S}$. Such asymmetry makes it hard to efficiently apply a standard $(s,t)$-reachability algorithm (such as breadth-first search) on the graph.\footnote{In contrast, in the {\em symmetric} case where in- and out-neighbor queries can be made for {\em every} vertex (and not just $v\in\bar S$), we can solve the reachability problem with  $\tO(n)$ queries. This requires a simple breadth-first search starting from $s$ where we discover neighbors using binary search.}

Both our randomized and deterministic algorithms for the reachability problem follow the same framework below, where we partition vertices in $\bar S$ into {\em heavy and light} vertices and find vertices that can reach some heavy vertices. Our randomized and deterministic algorithms differ in how they determine whether a vertex is heavy or light. 

\paragraph{Heavy/Light vertices.} 
Our reachability algorithms run in phases and keep track of a set of vertices that are reachable from the source vertex $s$, denoted by $F$ (for ``found''). We can assume that {\em $F$ contains all out-neighbors of vertices in $F \cap \bar S$},
because we can find these out-neighbors very efficiently by doing binary-search that makes $\tO(1)$ queries per out-neighbor. In each phase, the algorithm either 
\begin{itemize}[noitemsep]
    \item[(a)] increases the size of $F$ by an additive factor of at least $h$ for some parameter $h$ (we use either  $h = \sqrt{n}$ or $h = n^{1/3}$), or
    \item[(b)] returns whether there is an $(s,t)$ path. 
\end{itemize}
Hence, in total, there are at most $\frac{n}{h}$ many phases.
%
%We use $h = \sqrt{n}$ for the randomized algorithm and $h = n^{1/3}$ for the deterministic algorithm.\danupon{It's not clear why this parameters.}
%
% Note that if we discover that some $v\in \bar{S}$ is reachable from $s$, we can find
% all out-neighbors of $v$ in $(S\cup\{s,t\})\setminus F$ in $O(\log n)$ queries each.
% We do this is by using a binary search with queries of type~\ref{itm:query-1},
% halving the size of $X$ in each step.
% However, when we discover that some $v\in S$ is reachable from $s$, we can not use the same binary-search trick to efficiently find the out-neighbors of $v$ due to the asymmetry of the allowed queries.
%
To this end, for every $v\in \bar S$, we say that $v$ is {\em $F$-heavy} if either
\begin{itemize}[noitemsep]
    \item[(h1)] $v$ has at least $h$ out-neighbors to $S\setminus F$, or 
    \item[(h2)] there is an edge from $v$ to $t$.
\end{itemize}
If $v\in\bar S$ is not $F$-heavy, we say that it is {\em $F$-light}. We omit $F$ when it is clear from the context.
We emphasize that the notion of heavy and light applies only to vertices in $\bar S$.\danupon{TO DO: Say that we need $n$ queries per phase?} 
Two tasks that remain are how to determine if a vertex is heavy or light, and how to use this to achieve (a) or (b). 
%add at least $h$ vertices to $F$ in each phase. 

\paragraph{Heavy vertex reachability.} First, we show how to achieve (a) or (b). We assume for now that we know which vertices in $\bar S$ are heavy or light. We can also assume that we know all out-going edges of all light vertices (e.g. black edges in \Cref{fig:rev-bfs}); this requires $\tilde O(nh)$ queries over all phases.
Our main component is to determine a set of vertices that can reach {\em some} heavy vertex. (Heavy vertices are always in such set.)
We can do this with $\tilde O(n)$ queries essentially by simulating a breadth-first search process {\em reversely} from heavy vertices. This process leaves us with subtrees rooted at the heavy vertices with edges pointed to the roots; see \Cref{fig:rev-bfs} for an example. The actual algorithm is quite simple and can be found in \Cref{sec:augmentation}. 
%using the following {\em reversed BFS} process. 
%
% ===  BELOW IS THE SKETCH OF THE ALGORITHM ===
% Let $X\subseteq S$ be the set of vertices in $S$ that are not known to be able to reach some heavy vertex.\danupon{<--Is this a correct way to say thing?} We pick a vertex $v\in \bar S$ that can reach a heavy vertex and use binary search to find a vertex $x\in X$ that can reach $v$. This takes $O(\log n)$ queries. If such $x$ is not found, then we will not use such $v$ for this purpose again and repeat the same process using other choices of $v$. 
% %with another vertex $v\in \bar S$ that can reach a heavy vertex.
% %
% If such $x$ is found, then we can reduce $|X|$ by one and potentially find more light vertices that can reach heavy vertices. These are light vertices in $\bar S$ with edges pointing to $x$ (we already know of such edges by the assumption above). It is not hard to argue that this process requires $\tilde O(n)$ queries. \Cref{fig:rev-bfs}  shows an example of this process. Green edge and vertices are found via this process. We emphasize that a vertex $v$ may be able to reach many heavy vertices, we do {\em not} find all such heavy vertices since this will take too many queries. 

% Let $\bar R\subseteq \bar S$ be some vertices in $\bar S$ that are known to be reachable to some heavy vertices; initially, $\bar R$ include all heavy vertices. We select a vertex $v\in R$ and use binary search to find a vertex $x\in X\cap S$ that can reach $v$. If no such $x$ exists then we remove 
%
Once all vertices that can reach some heavy vertices are found, we end up in one of the following situations: 
\begin{itemize}[noitemsep]
    \item Some vertex in $F$ can reach a heavy vertex $v$ satisfying (h2). In this case, we know immediately that $s$ can reach $t$ via $v$. 
    \item Some vertex in $F$ can reach a heavy vertex $v$ satisfying (h1). In this case, we query and add all out-neighbors of $v$ in $S\setminus F$ (taking $\tO(n)$ queries). This adds at least $h$ vertices to $F$ as desired.
    \item No vertices in $F$ can reach any heavy vertex. In this case, we conclude that $s$ does not reach $t$: to be able to reach $t$, $s$ must be able to reach some vertex that points to $t$ (and thus is heavy). 
\end{itemize}

\begin{figure}
\centering
\setlength{\belowcaptionskip}{6pt}
\includegraphics[scale=0.8]{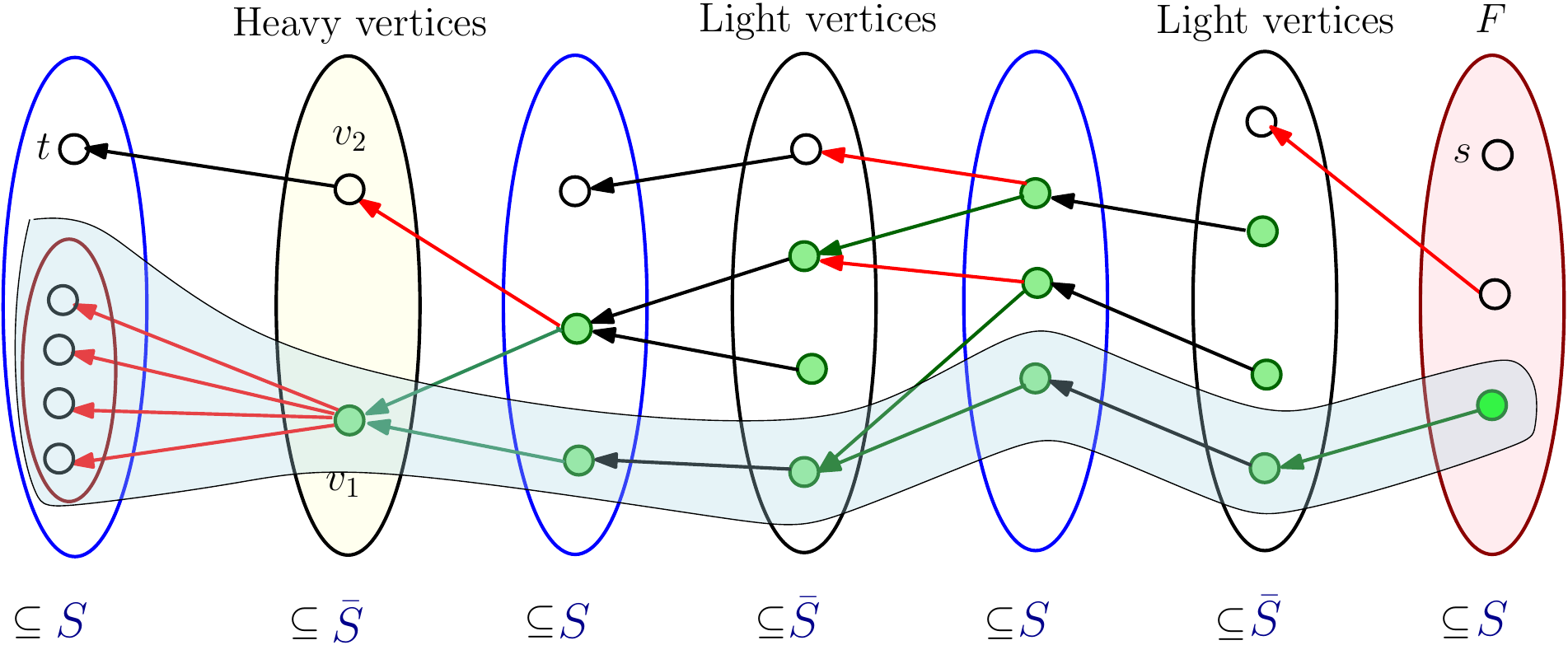}
 \caption{\small Example of the reverse BFS process to compute heavy vertex reachability. The vertices from $S$ and $\bar S$ occur at alternate layers. The black edges (out-edges of light nodes) are known a priori. The \textcolor{green!50!black}{green} edges are traversed in the reverse BFS procedure whereas the \textcolor{red}{red} edges are not traversed in the reverse BFS. The \textcolor{green!50!black}{green} vertices are discovered in the reverse BFS. The green vertices form a tree rooted at $v_1$. Vertex $v_1$ is heavy because of its large out-degree. Vertex $v_2$ is heavy because $t$ is its out-neighbor. The path from $F$ to $v_1$ and the out-neighbors of $v_1$ are highlighted in light-blue, which is added to $F$ after the reverse BFS. Note that, even though $v_2$ is reachable from $F$, the path from $F$ to $v_2$ is not discovered, and hence the algorithm moves to the next iteration. 
 }
\label{fig:rev-bfs} 
\end{figure}

\paragraph{Heavy/light categorization.} Again, this is where our randomized and deterministic algorithms differ. With randomness, we can use random sampling to approximate the out-degree of every vertex in $\bar S$ and find all out-going edges of vertices that are potentially light. This takes $\tilde O(nh+n^2/h)$ queries over all phases. 
For the deterministic algorithm, a naive idea is to maintain, for every $v\in\bar S$, up to $h$ out-going neighbors of $v$ in $S\setminus F$. The challenge is that when these neighbors are included in $F$, we have to find new neighbors. By carefully picking these neighbors, we can argue that in total only $\tO(n\sqrt{nh})$ queries are needed over all phases.

\paragraph{Summary.}
In total, in addition to the categorization of \emph{heavy}/\emph{light} vertices, we use
$\tO(n)$ queries to solve the heavy vertex reachability problem in each of the $\tO(n/h)$ phases. We also need $\tO(nh)$ queries over all phases to find at most $h$ out-neighbors in $S\setminus F$ of light vertices. So, in total,
our algorithm uses $\tO(\frac{n^2}{h} + nh)$ queries plus the number of queries needed
for the categorization, which is $\tO(\frac{n^2}{h} + nh)$ for randomized
and $\tO(n\sqrt{nh})$ for deterministic algorithms.

\subsection{Proof idea of Theorem~\ref{thm:intro:connection}: From reachability to matroid intersection}\label{sec:overview:transform}

The standard connection between the matroid intersection problem and the reachability problem that is exploited by most combinatorial algorithms \cite{aignerD,Lawler75,cunningham1986improved,chakrabarty2019faster,nguyen2019note} is based on finding
augmenting paths in what is called the \emph{exchange graph}.
Given a common independent set $S$ of the two matroids over common ground set $V$,
the \emph{exchange graph}~$G(S)$ is a directed bipartite graph over vertex set $V \cup \{s,t\}$ as in the
reachability problem above with $\bar S = V \setminus S$. The edges of the exchange graph are defined to ensure the following property: Finding an $(s,t)$-path in the exchange graph amounts to augmenting $S$,
i.e.~finding a new common
independent with a bigger size. Conversely,
if no $(s,t)$-path exists in the exchange graph, it is known that $S$ is of maximum
cardinality and, hence, $S$ can be output as the answer to the matroid intersection problem. Thus the problem of \textit{augmentation} in the exchange graph can be reduced to the \textit{reachability} problem where the neighborhood queries in the reachability problem correspond to the queries to the matroid oracles.\footnote{The independence queries are more powerful
than the neighborhood queries, but we are only interested in the neighborhood queries in our algorithm for the reachability problem.}

% There are a few difficulties in obtaining a subquadratic matroid intersection algorithm.
% On one hand, general separation oracle solvers, as in \cite{lee2015faster}, are stuck at
% an optimal $\tilde{\Theta}(n^2)$ independence queries if they do not abuse the matroid structure somehow.
% On the other hand, combinatorial algorithms are based on finding augmenting paths, such as those in \cite{edmonds1970submodular,cunningham1986improved,chakrabarty2019faster,nguyen2019note}, will need to somehow find each augmenting paths in $o(n)$ on average.

Let us suppose that we can solve the \emph{reachability} problem using $\mathcal{T}$
queries. An immediate and straightforward way of using this subroutine to solve matroid intersection is the following: Call this subroutine iteratively to find augmenting paths to augment along
in the exchange graph, thereby increasing the size of the common independent set by one in each iteration. As the size of the largest common independent set is $r$, we need to perform $r$ augmentations in total. This leads to an algorithm solving matroid intersection using $O(r\mathcal{T})$ independence queries.

To improve upon this, we avoid doing the majority of the augmentations by starting
with a good approximation of the largest common independent set.
We use the recent subquadratic $(1-\eps)$-approximation algorithm of \cite[Section~6]{chakrabarty2019faster} that uses
$\tO(n^{1.5}/\eps^{1.5})$ independence queries
to obtain a common independent set of size at least $r-\eps r$.
Once we obtain a common independent set with such approximation guarantee, we only need to perform
an additional $\eps r$ augmentations.
This is still not good enough to obtain a subquadratic matroid intersection algorithm
when combined with our efficient algorithms for the reachability problem from
\cref{thm:aug-overview}.

The final observation we make, is that for small $\eps = o(n^{-1/5})$, we can combine
the $\tO(n^{1.5}/\eps^{1.5})$ approximation algorithm of \cite{chakrabarty2019faster} with an efficient implementation of Cunningham's algorithm (as in \cite{chakrabarty2019faster,nguyen2019note}) to obtain a $(1-\eps)$-approximation algorithm for matroid intersection using $\tO(n^{9/5} + n/\eps)$ queries. This has a slightly better complexity than just running the approximation algorithm of \cite{chakrabarty2019faster}. The idea is to first run the
the $\tO(n^{1.5}/\eps'^{1.5})$ approximation algorithm with $\eps' \approx n^{-1/5}$,
and then run the Cunningham-style algorithm until the distance between $s$ and $t$
in the exchange graph becomes at least $\Theta(1/\eps)$. 

Our final algorithm is then:
\begin{enumerate}[noitemsep]
\item Run
the $\tO(n^{1.5}/\eps^{1.5})$-query $(1-\eps)$-approximation algorithm from \cite[Section~6]{chakrabarty2019faster} with $\eps = n^{-1/5}$ to obtain a common
independent set $S$ of size at least $r-n^{4/5}$.
This step takes $\tO(n^{9/5})$ queries.
\item Starting with $S$, run the Cunningham-style algorithm as implemented by \cite[Section~5]{chakrabarty2019faster} until the $(s,t)$-distance is at least
$\sqrt{\mathcal{T}}$ to obtain a common independent set of size at least $r-O(n/\sqrt{\mathcal{T}})$.
This step takes $\tO(n(r-|S|) + n\sqrt{\mathcal{T}}) = \tO(n^{9/5} + n\sqrt{\mathcal{T}})$ queries.
\item For the remaining $O(n/\sqrt{\mathcal{T}})$ augmentations, find augmenting paths
one by one by solving the reachability problem.
This step takes $\tO(n\sqrt{\mathcal{T}})$ queries.
\end{enumerate}
Hence we obtain a matroid intersection algorithm which uses
$\tO(n^{9/5}+n\sqrt{\mathcal{T}})$ independence queries, as in \cref{thm:intro:connection}.

\subsection{Organization} 
We start with the necessary preliminaries in \Cref{sec:prelims}. In \Cref{sec:augmentation}, we provide the subquadratic deterministic and randomized algorithms for augmentation. Finally, in \Cref{sec:main-algo}, we combine these algorithms for augmentation with existing algorithms to obtain subquadratic deterministic and randomized algorithms for matroid intersection. In \Cref{sec:augmentation}, we skip the description of an important subroutine called the heavy/light categorization. We devote \Cref{sec:heavy-light} for details of this subroutine.

\section{Preliminaries} \label{sec:prelims}

\paragraph{Matroid.} A \textit{matroid} is a combinatorial object defined by the tuple $\cM= (V, \cI)$, where the ground set $V$ is a finite set of elements and $\cI \subseteq 2^{V}$ is a non-empty family of subsets (denoted as the \textit{independent sets}) of the ground set $V$, such that the following properties hold: \begin{enumerate}
    \item \textbf{Downward closure:} If $S \in \cI$, then any subset $S' \subset S$ (including the empty set) is also in $\cI$,
    
    \item \textbf{Exchange property:} For any two sets $S_1, S_2 \in \cI$ with $|S_1| < |S_2|$, there is an element $v \in S_2 \setminus S_1$ such that $S_1 \cup \{v\} \in \cI$.
\end{enumerate}

\paragraph{Matroid Intersection.} Given two matroids $\cM_1 = (V, \cI_1)$ and $\cM_2 = (V, \cI_2)$ defined on the same ground set $V$, the \emph{matroid intersection problem}
is finding a maximum cardinality common independent set $S\in \cI_1 \cap \cI_2$.
When discussing matroid intersection, we will denote by $r$ the size of such a maximum cardinality common independent set and by $n$ the size of the ground set $V$.

\paragraph{Exchange graph.} Consider two matroids $\cM_1 = (V, \cI_1)$ and $\cM_2 = (V, \cI_2)$ defined on the same ground set $V$. Let $S \in \cI_1 \cap \cI_2$ be a common independent set. The \textit{exchange graph} $G(S)$, w.r.t. to the common independent set $S\in\cI_1\cap \cI_2$, is defined to be a directed bipartite graph where the two sides of the bipartition are $S$ and $\bar S = V\setminus S$. Moreover, there are two additional special vertices $s$ and $t$ (that are not included in either $S$ or $\bar S$) which have directed edges incident on them \textit{only} from $\bar S$. The directed edges (or arcs) are interpreted as follows: \begin{enumerate}
    \item Any edge of the form $(s,v)$ for $v \in \bar S$ implies that $S \cup \{v\}$ is an independent set in $\cM_1$.
    \item Similarly, any edge of the form $(v, t)$ for $v \in \bar S$ implies that $S \cup \{v\}$ is an independent set in $\cM_2$.
    \item Any edge of the form $(u, v) \in S \times \bar S$ implies that $(S \setminus \{u\}) \cup \{v\}$ is an independent set in $\cM_1$.
    \item Similarly, any edge of the form $(v, u) \in \bar S \times S$ implies that $(S \setminus \{u\}) \cup \{v\}$ is an independent set in $\cM_2$.
\end{enumerate}

We are interested in the notion of \textit{chordless} $(s,t)$-paths in $G(S)$ \cite[Section 2]{cunningham1986improved} which are defined next. For this definition, we consider a path as a sequence of vertices that take part in the path. A subsequence of a path is an ordered  subset of the vertices (not necessarily contiguous) of the path where the ordering respects the path ordering.

\begin{definition}
 An $(s,t)$-path $p$ is \emph{chordless} if there is no proper subsequence of $p$ which is also an $(s,t)$-path.
A chordless path in the exchange graph $G(S)$ is sometimes called an \emph{augmenting path}. 
\end{definition}

\begin{claim}[Augmenting path] \label{clm:exg-graph}
Consider a chordless path $p$ from $s$ to $t$ in $G(S)$ (if it exists), and let $V(p)$ be the elements of the ground set (or, equivalently, vertices in the exchange graph excluding $s$ and $t$) that take part in the path $p$. Then $S \triangle V(p)$ is a common independent set of $\cM_1$ and $\cM_2$.
\end{claim}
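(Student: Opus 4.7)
Parse the chordless $(s,t)$-path as $p = (s, v_1, u_1, v_2, u_2, \ldots, u_{k-1}, v_k, t)$ with $v_i \in \bar S$ and $u_i \in S$, so that
$S' := S \triangle V(p) = (S \setminus \{u_1,\ldots,u_{k-1}\}) \cup \{v_1,\ldots,v_k\}$. By the definition of $G(S)$ the edges of $p$ split cleanly between the two matroids: the edges $(s,v_1)$ and $(u_i, v_{i+1})$ are $\cM_1$-certificates giving $S + v_1 \in \cI_1$ and $(S - u_i) + v_{i+1} \in \cI_1$, while the edges $(v_i, u_i)$ and $(v_k, t)$ are $\cM_2$-certificates giving $(S - u_i) + v_i \in \cI_2$ and $S + v_k \in \cI_2$. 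The proofs that $S' \in \cI_1$ and $S' \in \cI_2$ are completely symmetric, so I would only write out the $\cM_1$ case.

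For the $\cM_1$ direction, the plan is to invoke the classical ``unique perfect matching'' form of basis exchange (Krogdahl's lemma / Brualdi's symmetric exchange theorem; the standard tool behind Cunningham's analysis of the exchange graph in the cited Section~2). Form the auxiliary bipartite graph $H_1$ on parts $U = \{u_1,\ldots,u_{k-1}\}$ and $W = \{v_2,\ldots,v_k\}$, with $u_i \sim v_j$ precisely when $(S - u_i) + v_j \in \cI_1$, i.e.\ when $(u_i,v_j)$ is an $\cM_1$-arc of $G(S)$. The $\cM_1$-arcs along $p$ immediately realize the diagonal matching $u_i \leftrightarrow v_{i+1}$ in $H_1$. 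Combined with the fact $S + v_1 \in \cI_1$, the unique-matching lemma yields $(S + v_1) \setminus U \cup W = S' \in \cI_1$, \emph{provided} this diagonal is the only perfect matching of $H_1$.

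Verifying uniqueness from chordlessness is the crux of the argument and the step I expect to be the main obstacle. Suppose some other perfect matching of $H_1$ existed; then its symmetric difference with the diagonal would decompose into alternating cycles of the shape $u_{i_1} - v_{i_1+1} - u_{i_2} - v_{i_2+1} - \cdots - u_{i_\ell} - v_{i_\ell+1} - u_{i_1}$. The indices $i_p$ are distinct but cannot be monotone around a cycle, so some consecutive pair has $i_p < i_{p+1}$, and then the non-diagonal edge $u_{i_p} \sim v_{i_{p+1}+1}$ of $H_1$ corresponds to a forward arc $(u_{i_p}, v_{i_{p+1}+1})$ of $G(S)$ with $i_{p+1}+1 > i_p + 1$. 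Replacing the sub-segment of $p$ from $u_{i_p}$ to $v_{i_{p+1}+1}$ by this single arc gives a proper subsequence of $p$ that is still an $(s,t)$-path in $G(S)$, contradicting the chordlessness of $p$. The analogous argument for $\cM_2$, using the bipartite graph on $U$ and $\{v_1,\ldots,v_{k-1}\}$ together with the certificate $S + v_k \in \cI_2$, yields $S' \in \cI_2$ and completes the claim.
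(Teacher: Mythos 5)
The paper does not prove this claim; it is cited from Cunningham's work (\cite[Section~2]{cunningham1986improved}), so there is no in-paper argument to compare against. Your overall strategy (reduce to a unique-perfect-matching exchange lemma, and derive uniqueness from the absence of forward arcs) is the right one, and your argument that chordlessness kills any alternating cycle between the diagonal matching and a hypothetical second matching is essentially sound. However, the way you set up the lemma leaves a genuine gap.

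The unique-matching exchange lemma (Krogdahl / Schrijver Thm.~39.13) applies to two sets of \emph{equal} size: with $I = S + v_1$ and $J = S'$, the bipartite graph it requires is the one with edges $u_i \sim v_j$ iff $(S + v_1 - u_i) + v_j \in \cI_1$ --- not your $H_1$, which tests $(S - u_i) + v_j \in \cI_1$. These are different graphs (the first has a subset of the edges of the second), and in particular it is not automatic that the diagonal $u_i \sim v_{i+1}$ even lies in the correct graph. The missing step is to show $(S + v_1 - u_i) + v_{i+1} \in \cI_1$, which requires the chordlessness condition you never invoke: that $S + v_j \notin \cI_1$ for all $j \ge 2$ (no arc $(s,v_j)$ with $j>1$). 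Indeed, from $(S - u_i) + v_{i+1} \in \cI_1$ and $S + v_1 \in \cI_1$, the exchange axiom gives some $e \in \{u_i, v_1\}$ with $(S - u_i) + v_{i+1} + e \in \cI_1$; the case $e=u_i$ is ruled out precisely by $S + v_{i+1} \notin \cI_1$. Your proposal only uses the ``no forward $\cM_1$-arc $(u_i,v_j)$'' consequence of chordlessness and never the ``no $(s,v_j)$-arc'' consequence, and the lemma you state --- unique perfect matching in $H_1$ plus $S + v_1 \in \cI_1$ implies $S' \in \cI_1$ --- is false as a freestanding statement. Concretely, take the graphic matroid on vertices $\{1,2,3\}$ with $u_1=(1,2)$, $v_1=(2,3)$, $v_2=(2,3)$ a parallel copy, $S=\{u_1\}$, $k=2$: then $H_1$ trivially has a unique perfect matching, $S + v_1 \in \cI_1$, yet $S' = \{v_1,v_2\}$ is dependent. (Of course this path has the chord $(s,v_2,t)$ since $S + v_2 \in \cI_1$, which is exactly what your argument fails to exploit.)
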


If we examine the set $S \triangle V(p)$ obtained from Claim \ref{clm:exg-graph}, it is clear that the number of elements added to the set $S$ is one more than the number of elements removed from $S$. This observation immediately gives the following corollary,
and shows the importance of the notion of exchange graphs.

\begin{corollary}
The size of the largest common independent set of $\cM_1$ and $\cM_2$ is at least $|S| + 1$ if and only if $t$ is reachable from $s$ in $G(S)$.
\end{corollary}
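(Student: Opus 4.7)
The statement has two directions, and I would treat them separately. The \emph{if} direction follows immediately from \cref{clm:exg-graph}; the \emph{only if} direction is the classical no-augmenting-path theorem of Edmonds and is where the real work lies.

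For the easy direction, suppose $t$ is reachable from $s$ in $G(S)$. Take any simple $(s,t)$-path and shorten it to a chordless $(s,t)$-path $p$ by repeatedly replacing any subsequence that is itself an $(s,t)$-path by that subsequence; this terminates and yields a chordless path. By \cref{clm:exg-graph}, $S \triangle V(p) \in \cI_1 \cap \cI_2$. It remains to show $|S \triangle V(p)| = |S|+1$. Because $G(S)$ is bipartite with parts $S$ and $\bar S$ (plus the two special vertices $s,t$ whose only incidences are with $\bar S$), the path $p$ necessarily has the form
\[
s \to v_1 \to u_1 \to v_2 \to u_2 \to \cdots \to u_{k-1} \to v_k \to t
\]
with $v_i \in \bar S$ and $u_i \in S$. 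Hence $|V(p) \cap \bar S| = k$ and $|V(p) \cap S| = k-1$, so $|S \triangle V(p)| = |S| - (k-1) + k = |S|+1$, proving this direction.

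For the harder direction, I would argue by contrapositive: assume $t$ is not reachable from $s$ in $G(S)$, and show that $S$ is a maximum common independent set. Let $U \subseteq V$ be the set of ground-set vertices reachable from $s$, so $t \notin U$ and hence there is no arc from $U \cap \bar S$ to $t$, and no arc from $U \cap S$ to any vertex outside $U$ in the $(S \to \bar S)$ direction of $G(S)$, and so on (these closure conditions encode that both $\cM_1$- and $\cM_2$-exchanges stay inside $U$). The plan is to use these closure properties together with the edge definitions of $G(S)$ to deduce the two rank identities
\[
r_1(U) = |S \cap U| \qquad \text{and} \qquad r_2(V \setminus U) = |S \setminus U|,
\]
where $r_i$ is the rank function of $\cM_i$. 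For $r_2(V \setminus U)$: since $s$ has no arc to any $v \in \bar S \setminus U$ (else $v$ would be reachable), $S \cup \{v\}$ is dependent in $\cM_1$ for all such $v$, which combined with the $(S \to \bar S)$ edge conditions in $\cM_1$ will let me extend $S \setminus U$ to a basis of $V \setminus U$ in $\cM_2$ and vice versa for $r_1(U)$. Adding these two identities gives $r_1(U) + r_2(V \setminus U) = |S|$, and by the Matroid Intersection Min-Max theorem every common independent set $T$ satisfies $|T| \le r_1(A) + r_2(V \setminus A)$ for every $A \subseteq V$, so $|T| \le |S|$, i.e.\ $S$ is maximum.

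The main obstacle is the second direction, specifically verifying the two rank equalities carefully from the edge definitions of $G(S)$ listed in the preliminaries; this is essentially the content of Edmonds' theorem and is a standard but delicate check. An alternative I would consider if that becomes cumbersome is to avoid invoking the min-max theorem directly and instead argue in a self-contained way: given any common independent set $T$ with $|T| > |S|$, use the matroid exchange axiom twice (once in each matroid) to construct an $(s,t)$-path in $G(S)$, contradicting unreachability. Either route yields the corollary.
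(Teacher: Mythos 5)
The paper's own treatment of this corollary is very light: it only justifies the ``if'' direction, observing that Claim~\ref{clm:exg-graph} yields a common independent set $S\triangle V(p)$ with exactly one more element than $S$, and then simply states the converse as part of the corollary without proof, treating it as the classical no-augmenting-path theorem of Edmonds. Your ``if'' direction is the same argument, with the alternation count along the bipartite path made explicit, and it is correct; you also correctly note that any $(s,t)$-path can be pruned to a chordless one before invoking Claim~\ref{clm:exg-graph}.

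Your ``only if'' direction deliberately goes beyond what the paper does, and the min--max-certificate strategy you outline is indeed the standard proof, but your two rank identities are attached to the wrong halves of the partition. In the exchange graph, $\cM_1$ governs the edges $s\to\bar S$ and $S\to\bar S$, while $\cM_2$ governs $\bar S\to S$ and $\bar S\to t$. With $U$ the set of ground elements reachable from $s$, the correct dual certificate is $A=V\setminus U$, and the identities to prove are $r_1(V\setminus U)=|S\setminus U|$ and $r_2(U)=|S\cap U|$, not $r_1(U)=|S\cap U|$ and $r_2(V\setminus U)=|S\setminus U|$ as you wrote. The assignment matters: to show $r_1(V\setminus U)\le|S\setminus U|$ one takes $v\in\bar S\setminus U$ with $(S\setminus U)\cup\{v\}\in\cI_1$; then either $S\cup\{v\}\in\cI_1$, giving an $s\to v$ edge, or the $\cM_1$-circuit of $v$ in $S\cup\{v\}$ contains some $u\in S\cap U$, giving a $u\to v$ edge --- in both cases $v$ would be reachable, contradiction. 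With your assignment, the analogous argument for $r_1(U)\le|S\cap U|$ only produces an edge from an \emph{unreachable} $u\in S\setminus U$ into a reachable $v\in\bar S\cap U$, which is perfectly allowed and gives no contradiction. Once you swap the roles of the two sides (equivalently take $A=V\setminus U$ in the weak-duality bound $|T|\le r_1(A)+r_2(V\setminus A)$), your plan --- and the alternative route via two applications of the exchange axiom --- goes through.
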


It is useful to note that the shortest $(s,t)$-path in $G(S)$ is always chordless.
Many combinatorial matroid intersection algorithms thus focus on finding
shortest $(s,t)$-paths.
The following claim relating the distance from $s$ to $t$ in $G(S)$ and
the size of $S$ is useful for approximation algorithms for matroid intersection.

\begin{claim}[\cite{cunningham1986improved}] \label{clm:dists}
If the length of the shortest $(s,t)$-path in $G(S)$ is at least $d$, then 
$|S| \ge (1-O(\frac{1}{d}))r$, where $r$ is the size of the largest common independent set.
\end{claim}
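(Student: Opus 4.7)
The plan is to compare $S$ against a maximum common independent set $S^{*}$ with $|S^{*}|=r$ and to extract many internally vertex-disjoint augmenting paths from their symmetric difference $S\triangle S^{*}$. Because $S\triangle S^{*}$ is a subset of the ground set we automatically have $|S\triangle S^{*}|\le|S|+|S^{*}|\le 2r$; this is the ``budget'' that will be played against the assumed lower bound on path length.

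The technical core is a decomposition lemma: I would show that $S\triangle S^{*}$ can be arranged into $r-|S|$ internally vertex-disjoint $(s,t)$-paths in $G(S)$. To produce them, I would invoke the exchange property of $\cM_{1}$ applied to $S$ and $S^{*}$ to obtain a bijection between $|S\setminus S^{*}|$ elements of $S^{*}\setminus S$ and $S\setminus S^{*}$ whose pairs correspond exactly to arcs $S\setminus S^{*}\to \bar S\cap S^{*}$ in the exchange graph, together with $r-|S|$ leftover elements $v\in \bar S\cap S^{*}$ that satisfy $S\cup\{v\}\in\cI_{1}$, i.e.~yielding edges from $s$. Applying the symmetric exchange property to $\cM_{2}$ yields the reverse-direction arcs $\bar S\cap S^{*}\to S\setminus S^{*}$ together with $r-|S|$ edges into $t$. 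Interleaving the two bijections produces $r-|S|$ walks from $s$ to $t$ in $G(S)$ whose internal vertices lie entirely in $S\triangle S^{*}$; a standard uncrossing step (replace each walk by a chordless subwalk, then reroute to remove shared vertices) makes them internally vertex-disjoint without increasing total length.

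With the decomposition in hand the remainder is arithmetic: by hypothesis each of the $r-|S|$ paths has length at least $d$ and so contains at least $d-1$ ground-set vertices, while the paths are internally disjoint with all internal vertices living inside $S\triangle S^{*}$. Hence
\[
(r-|S|)(d-1)\;\le\;|S\triangle S^{*}|\;\le\;2r,
\]
which gives $r-|S|\le 2r/(d-1)=O(r/d)$, i.e.~$|S|\ge (1-O(1/d))r$ as claimed.

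The main obstacle is the decomposition step: showing that enough disjoint augmenting paths really live inside $S\triangle S^{*}$ requires a careful joint use of the exchange axioms for the two matroids, and is exactly where Cunningham's original analysis does the real work. Once the paths are in hand, everything downstream is the one-line counting argument above.
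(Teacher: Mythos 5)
The paper does not prove this claim; it is cited from Cunningham \cite{cunningham1986improved} and stated without proof, so there is no in-paper argument to compare against. Your proof takes the standard route, and the outline is correct. Two remarks on the details. First, the existence of the two injections (one per matroid) together with the $r-|S|$ free elements is indeed the technical heart, and the bare exchange axiom does not give it directly: the usual proof first extends $S$ by $r-|S|$ elements of $S^*$ inside $\cI_1$ (these are the free elements with arcs from $s$), and then applies a Brualdi-type perfect-matching lemma for same-size independent sets, which in turn rests on Hall's condition and a rank/circuit argument; the symmetric construction in $\cM_2$ yields the free elements with arcs to $t$. Second, once the two injections are fixed, every vertex of $S\triangle S^*$ has exactly one in-arc and one out-arc in the resulting arc-subgraph of $G(S)$, so the decomposition into vertex-disjoint $(s,t)$-paths (plus cycles) is automatic from the degree structure; the ``uncrossing step'' you mention is unnecessary. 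With these in place, the counting $(r-|S|)(d-1)\le|S\triangle S^*|\le 2r$ correctly yields $|S|\ge(1-O(1/d))r$.
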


\paragraph{Matroid query oracles.} There are two primary models of query oracles associated with the matroid theory: (i) the independence query oracle, and (ii) the rank query oracle. The independence query oracle, given a set $S \subseteq V$ of a matroid $\cM$, outputs 1 iff $S$ is an independent set of $\cM$ (i.e., iff $S \in \cI$). The rank query oracle, given a set $S \subseteq V$, outputs the rank of $S$, $\rk_\cM(S) \overset{def}{=} \max_{T \subseteq S : T \in \cI}|T|$, i.e., the size of the largest independent set contained in $S$. Clearly, if $S$ itself is an independent set, then $\rk_\cM(S) = |S|$. Hence, a rank query oracle is at least as powerful as the independence query oracle. In this work, we are however interested primarily in the independence query oracle model. Next, we state two claims regarding the independence query oracle that we use in the paper.

%\begin{claim}[Arc discovery] \label{clm:is-edge}
%By issuing one independence query each, we can find out \begin{enumerate}
%    \item given a pair of vertices $u \in S$ and $v\in \bar S$, whether there exists an arc from $u$ to $v$ (respectively $v$ to $u$), or
%    
%    \item given a vertex $v\in \bar S$, whether $v$ is an out-neighbor of $s$, or
%    
%    \item given a vertex $v\in \bar S$, whether $v$ is an in-neighbor of $t$.
%\end{enumerate}
%\end{claim}
\begin{claim}[Edge discovery] \label{clm:is-edge}
    By issuing one independence query each, we can find out
    \begin{enumerate}[label=(\roman*)]
    \item given a vertex $v\in \bar S$, whether $v$ is an out-neighbor of $s$; or
    \item given a vertex $v\in \bar S$, whether $v$ is an in-neighbor of $t$; or
    \item given a vertex $v\in \bar S$ and a subset $X\subseteq S$, whether there exists an edge from some vertex in $X$ to $v$; or
    \item given a vertex $v\in \bar S$ and a subset $X\subseteq S$, whether there exists an edge from $v$ to some vertex in $X$.
\end{enumerate}
\end{claim}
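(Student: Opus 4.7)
The plan is to split the four parts by difficulty. Parts (i) and (ii) follow immediately from the definition of the exchange graph: an edge $(s,v)$ exists iff $S\cup\{v\}\in\cI_1$, and an edge $(v,t)$ exists iff $S\cup\{v\}\in\cI_2$, and each of these is a single query to the appropriate independence oracle. The substance lies in parts (iii) and (iv); I will prove (iii) and note that (iv) is symmetric, obtained by swapping the roles of $\cM_1$ and $\cM_2$.

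Unrolling the definition of the exchange graph, the question in (iii) is whether there exists $u\in X$ with $(S\setminus\{u\})\cup\{v\}\in\cI_1$. My plan is to reduce this to the single condition
\[
(S\setminus X)\cup\{v\}\in\cI_1,
\]
which can be decided by one oracle query to $\cM_1$. For the $(\Leftarrow)$ direction, downward closure handles it in one line, since $(S\setminus X)\cup\{v\}\subseteq (S\setminus\{u\})\cup\{v\}$ whenever $u\in X$. For the $(\Rightarrow)$ direction, I set $A:=(S\setminus X)\cup\{v\}$ and invoke the exchange property of $\cM_1$: because $A$ and $S$ are both independent and $|A|=|S|-|X|+1\le|S|$, I can extend $A$ by $|X|-1$ elements, each drawn from $S\setminus A$, until $A$ has size $|S|$. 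The resulting independent set then takes the form $(S\setminus\{u\})\cup\{v\}$ for the unique $u\in X$ that was never added, witnessing the desired edge in $G(S)$.

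The only step that requires a bit of care — more a bookkeeping check than a real obstacle — is verifying that every iteration of the exchange step pulls an element out of $X$ rather than from somewhere else in the ground set. This is automatic: the exchange property applied with witness set $S$ forces each new element to lie in $S\setminus A$, and $S\setminus A$ equals $X$ at the start and shrinks by exactly one element per iteration, so it remains a subset of $X$ throughout. Once this is settled, parts (iii) and (iv) each cost a single oracle call, completing the claim.
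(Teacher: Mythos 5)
Your proof is correct and uses exactly the same query as the paper — testing whether $(S\cup\{v\})\setminus X$ is independent in $\cM_1$ (resp.\ $\cM_2$) for parts (iii)--(iv), and $S\cup\{v\}$ for parts (i)--(ii). The paper merely asserts these queries suffice without justification; your exchange-property argument (and the observation that downward closure handles the converse) supplies the verification the paper leaves implicit.
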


Claim \ref{clm:is-edge} follows from observing that we can make the following kinds of independence queries:
(i-ii) whether $S \cup \{v\}$ is an independent set in $\cM_1$ respectively $\cM_2$, and
(iii-iv) whether $S \cup \{v\} \setminus X$ is an independent set in $\cM_1$ respectively $\cM_2$.
Note that these edge-discovery queries can simulate the neighborhood-queries in the reachability problem.
%from \cref{sec:tech-overview}.

With these kinds of queries, we can perform a binary search to find an in-/out-neighbor
of $v\in \bar{S}$.
The following lemma is proven in \cite[Lemma 11]{chakrabarty2019faster} and also mentioned in \cite{nguyen2019note}. We skip the proof in the paper.

\begin{claim}[Binary search with independence/neighborhood queries, \cite{nguyen2019note,chakrabarty2019faster}] \label{clm:bin-search}
Consider a vertex $v \in \bar S$ and a subset $X\subseteq S\cup \{s,t\}$. By issuing $O(\log r)$ independence queries to $\cM_1$, we can find a vertex $u\in X$ such that there is an edge $(u,v)$ (i.e., $u$ is an in-neighbor of $v$), or otherwise determine that no such
edge exists. Similarly, by issuing $O(\log r)$ independence queries to $\cM_2$, we can find a vertex $u'\in X$ such that there is an edge $(v, u')$ (i.e., $u'$ is an out-neighbor of $v$).
\end{claim}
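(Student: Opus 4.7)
The plan is a standard halving binary search, using the edge-discovery queries from \Cref{clm:is-edge} to extract one bit of information per independence query about the identity of a neighbor of $v$. I will describe the in-neighbor case; the out-neighbor case is symmetric. First, I would separate out the special source $s$: if $s\in X$, one application of query (i) of \Cref{clm:is-edge} tests whether $(s,v)$ is an edge, and if the answer is yes we return $u=s$ immediately. After this preprocessing it suffices to search within $X':=X\cap S$, which is a subset of the current common independent set and so has size at most $|S|\le r$.

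Next, I would spend one type-(iii) query from \Cref{clm:is-edge} on the whole set $X'$ to test whether any in-neighbor of $v$ lies in $X'$; if not, report that no such edge exists. Otherwise, I would run the following recursion, maintaining the invariant that the current candidate set $Y\subseteq X'$ is known to contain an in-neighbor of $v$. Partition $Y$ into two roughly equal halves $Y=Y_1 \cupdot Y_2$, and spend one more type-(iii) query on $Y_1$. If it returns yes, recurse on $Y_1$; otherwise the invariant forces the sought neighbor into $Y_2$, and we recurse on $Y_2$. The recursion terminates when $|Y|=1$, in which case the unique element is the desired in-neighbor $u$. Since $|X'|\le r$ and we halve the candidate set each step, this uses $O(\log r)$ independence queries to $\cM_1$ in total.

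The out-neighbor case is identical with the roles swapped: handle $t\in X$ via query (ii) of \Cref{clm:is-edge}, and use query type (iv) with the oracle for $\cM_2$ in the halving search. I do not anticipate a real obstacle here, since all of the content is already packaged in \Cref{clm:is-edge}; the only bookkeeping is to separate out the endpoints $s,t$ so that every intermediate query is a legal instance of queries (iii) or (iv), and to observe that the candidate set is always contained in $S$ so that $|X'|\le r$ and the logarithmic factor is in the correct parameter $r$ rather than $n$.
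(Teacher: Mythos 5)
Your proof is correct and is exactly the standard halving binary search that the paper attributes to \cite{nguyen2019note,chakrabarty2019faster} and sketches in its technical overview; the paper itself deliberately omits the proof, citing \cite{chakrabarty2019faster} (their Lemma 11). Your care in peeling off $s$ (resp.\ $t$) so that every subsequent query is a legal type-(iii) (resp.\ (iv)) query on a subset of $S$, and in noting that this pins the search space to size $|S|\le r$ so the bound is $O(\log r)$ rather than $O(\log n)$, is precisely the bookkeeping the claim requires.
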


We will assume $\textsc{InEdge}(v,X)$
respectively $\textsc{OutEdge}(v,X)$ are procedures which implement \cref{clm:bin-search}.

% matroid defitions, exchange graph
% Useful related lemmas from previous paper.

\section{Algorithms for augmentation} \label{sec:augmentation}

From Claim \ref{clm:exg-graph}, we know the following: Given a common independent set $S$, either $S$ is of maximum cardinality or there exists a (directed) $(s,t)$-path in the exchange graph $G(S)$. In this section, we consider the $(s,t)$-reachability
problem in $G(S)$ using independence oracles.
% Before this paper, we are not aware of any algorithm beating the trivial algorithm which queries for each of the $\Theta(nr)$ possible edges in $G(S)$.
Our main results in this section are the following two theorems. We denote the size of $S$ as $|S| = r$ in both of these theorems.\footnote{Note that $r$ usually denotes the size of the maximum common independent set which is an upper bound on the size of the vertex set $S$. We abuse the notation and use $r$ here to denote $|S|$.}

\begin{theorem}[Randomized augmentation]
\label{thm:aug}
There is a randomized algorithm which with high probability
uses $O(n\sqrt{r}\log n)$ independence queries and
either determines that $S$ is of maximum cardinality or finds an augmenting path in $G(S)$.
\end{theorem}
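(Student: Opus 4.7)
My plan is to follow the heavy/light framework outlined in the technical overview, instantiated with threshold $h=\sqrt{r}$. We maintain a set $F$ of vertices known to be reachable from $s$, initialized with $\{s\}$ and all out-neighbors of $s$ in $\bar S$ (found in $\tO(1)$ queries each using \cref{clm:bin-search}). We preserve the invariant that whenever a vertex $v\in\bar S$ is added to $F$, its entire out-neighborhood in $S\cup\{t\}$ is also added; each such edge costs $O(\log r)$ queries by binary search, and since every edge is charged only once over the whole algorithm this closure costs $\tO(1)$ per edge discovered in total.

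The algorithm runs in phases. At the start of each phase, we classify every vertex $v\in\bar S$ as either $F$-heavy (it has an edge to $t$, or at least $h$ out-neighbors in $S\setminus F$) or $F$-light, and for each light vertex we additionally record all of its at most $h$ out-neighbors in $S\setminus F$. With these recorded light-vertex out-edges in hand, we run a reverse BFS from the heavy vertices, discovering the set $A$ of vertices that can reach some heavy vertex. For a vertex $y\in\bar S$ newly popped from the BFS queue, we use $\textsc{InEdge}(y,S\setminus A)$ to find its in-neighbors in $S$ one at a time; for a vertex $y\in S$, we find its in-neighbors in $\bar S$ by looking up the recorded out-edges of light vertices (since any heavy in-neighbor is already in $A$). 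This uses $\tO(n)$ queries per phase because each BFS parent edge is found with one binary search and the total number of newly added vertices is at most $n$. If $A\cap F=\emptyset$, then no heavy vertex is reachable from $s$; since any $(s,t)$-path ends with an edge to $t$ whose tail is heavy, this implies $S$ is of maximum cardinality. Otherwise, we extract a path from some vertex in $F$ to a reachable heavy vertex $v$; if $v$ has an edge to $t$ we output the resulting augmenting path, and otherwise we enumerate all out-neighbors of $v$ in $S\setminus F$ in $\tO(n)$ queries by repeated binary search, add them and the reach-path to $F$, and close the invariant. Since $v$ has at least $h$ such out-neighbors, $|F\cap S|$ grows by at least $h$, so there are at most $r/h=\sqrt{r}$ phases, contributing $\tO(n\sqrt{r})$ queries from reverse BFS, expansion, and invariant closure combined.

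The main technical piece --- and the source of the randomization --- is the heavy/light categorization subroutine (deferred to \cref{sec:heavy-light}). The intended approach is: in each phase, for each $v\in\bar S$, sample a uniformly random subset $R\subseteq S\setminus F$ of size $\Theta(|S\setminus F|/h\cdot\log n)$ and issue one query from \cref{clm:is-edge}(iv) asking whether $v$ has an out-neighbor in $R$. A vertex with at least $h$ out-neighbors in $S\setminus F$ is detected with probability $1-n^{-\Omega(1)}$, so by a union bound no truly heavy vertex is missed across all phases with high probability. Any vertex failing the sample test is verified to be light by an exhaustive binary search for out-neighbors in $S\setminus F$, costing $\tO(h)$ queries and recording all such neighbors. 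The main obstacle in the analysis is amortizing this verification cost across phases: since $|S\setminus F|$ shrinks monotonically, a vertex that is ever certified light remains light in every subsequent phase, so each vertex is exhaustively searched at most once across the whole run, for a total of $\tO(nh)=\tO(n\sqrt{r})$. Adding the $\tO(n)$ per-phase sampling cost over $\sqrt{r}$ phases yields $\tO(n\sqrt{r})$ for categorization, matching the claimed bound $O(n\sqrt{r}\log n)$.
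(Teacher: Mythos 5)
Your overall framework and parameter choice are right (heavy/light with $h=\sqrt{r}$, reverse BFS, random sampling for categorization, and the paper actually has a typo on this point — it lists $h=r^{1/3}$ for the randomized case, but the arithmetic in the cost analysis only works out to $O(n\sqrt{r}\log n)$ with $h=\sqrt{r}$, so you have the correct exponent). However, your randomized categorization subroutine has a genuine flaw that breaks the phase count.

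Your test declares $v$ heavy whenever a single random sample $R\subseteq S\setminus F$ of size $\Theta(|S\setminus F|\log n/h)$ hits the out-neighborhood of $v$, and only brute-forces vertices where the sample misses. You correctly argue no truly-heavy vertex is missed, but you never control \emph{false positives}: a light vertex with out-degree, say, $h/2$ in $S\setminus F$ has its sample hit with probability roughly $1-n^{-1/2}$, so it is declared heavy essentially always. When the reverse BFS then reaches such a falsely-heavy vertex from $F$, you expand it and add fewer than $h$ new vertices to $F\cap S$, so the phase does \emph{not} make $h$ progress. Your claim ``Since $v$ has at least $h$ such out-neighbors, $|F\cap S|$ grows by at least $h$'' is exactly what fails. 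In the worst case this gives $\Theta(n)$ phases rather than $O(\sqrt{r})$, and the algorithm degrades to $\tO(n^2)$ queries. The paper avoids this with a two-sided test (Section 5.1, Claim 5.2): it chooses a sample size $k$ so that a degree-$<h$ vertex is \emph{missed} with probability $\ge 3/4$ while a degree-$>10h$ vertex is \emph{hit} with probability $\ge 3/4$, repeats $O(\log n)$ times, and takes a majority vote. Crucially, every vertex \emph{not} confidently declared heavy (i.e., every vertex of degree $<h$ and also every gray-zone vertex of degree $\le 10h$) is brute-forced, so w.h.p.\ no light vertex is ever labeled heavy, and each brute-force still costs only $\tO(h)$ because the vertex has degree $\le 10h$ w.h.p. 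Your fix must ensure the ``declared heavy'' set never contains a light vertex (w.h.p.), not merely that it contains every heavy vertex.

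A smaller omission: the theorem requires an \emph{augmenting} path, i.e., a chordless $(s,t)$-path, and the path your reverse BFS finds need not be chordless. The paper spends an extra $\tO(r)$ post-processing queries to shortcut chords; you should mention this step (it does not affect the asymptotics).
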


\begin{theorem}[Deterministic augmentation]
\label{thm:aug-det}
There is a deterministic algorithm which uses $O(nr^{2/3}\log r)$ independence queries and
either determines that $S$ is of maximum cardinality or finds an augmenting path in $G(S)$.
\end{theorem}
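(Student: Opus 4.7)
The plan is to implement the heavy/light framework from the technical overview deterministically with parameter $h = \Theta(r^{1/3})$, treating the deterministic heavy/light categorization subroutine (whose $\tO(n\sqrt{rh}) = \tO(nr^{2/3})$ amortized query bound over all phases is deferred to \Cref{sec:heavy-light}) as a black box. I initialize $F := \{s\}$; then in each phase I first extend $F$ by using \textsc{OutEdge} from \Cref{clm:bin-search} to add every out-neighbor of every $v \in F \cap \bar S$ that is not yet in $F$, maintaining the invariant that $F$ is closed under out-edges originating in $\bar S$. I then call the categorization subroutine, which returns the partition of $\bar S$ into $F$-heavy and $F$-light vertices together with, for every light $v$, the complete list of its out-neighbors in $S \setminus F$.

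With the categorization in hand, I run a reverse BFS in the subgraph formed by (i) the out-edges from $F \cap \bar S$ collected in the previous step and (ii) the out-edges from light vertices returned by the categorization, starting the BFS simultaneously from every heavy vertex and traversing edges in reverse. If the reverse BFS reaches some vertex of $F$ through a heavy vertex $v$ satisfying (h2) (i.e.\ $v \to t$), I extract an $(s,t)$-walk from the BFS tree, shorten it to a chordless augmenting path, and return. If the BFS reaches $F$ through only (h1)-heavy vertices, I take any such $v$ and use \textsc{OutEdge} to enumerate all out-neighbors of $v$ in $S \setminus F$ by repeated binary search; by the definition of (h1) this adds at least $h$ new elements to $F$, and I begin the next phase. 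If no heavy vertex is reached from $F$ at all, I return that $S$ is of maximum cardinality, relying on the fact that every $(s,t)$-path in $G(S)$ must end at a vertex of $\bar S$ that is an in-neighbor of $t$ and is therefore $F$-heavy, so unreachability of every heavy vertex from $F$ forces unreachability of $t$.

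Each non-terminating phase grows $|F|$ by at least $h$, so the number of phases is at most $r/h + 1 = O(r^{2/3})$. The binary searches used to grow $F$ in step (i) cost $O(\log r)$ per new vertex, contributing $\tO(r)$ in total across all phases. The reverse BFS only traverses edges that are already known; all other work per phase (the enumeration of $v$'s out-neighbors, reading the categorization output, and constructing the $(s,t)$-path) amounts to $\tO(n)$ queries per phase, and therefore $\tO(nr/h) = \tO(nr^{2/3})$ queries over all phases. Adding the $\tO(nr^{2/3})$ cost of the categorization subroutine yields the claimed $O(nr^{2/3} \log r)$ bound.

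The main obstacle, and the reason the deterministic bound exceeds the randomized one, is the categorization step itself, whose analysis is separated into \Cref{sec:heavy-light}. Without random sampling there is no cheap estimator for the out-degree of $v \in \bar S$ to $S \setminus F$, so the subroutine must maintain up to $h$ explicit out-neighbors per vertex and amortize the cost of refreshing these when maintained neighbors slide into $F$. The budget of $\tO(n\sqrt{rh})$ is tight enough that any naive refresh policy exceeds it; the careful refresh policy (choosing which out-neighbor to retrieve so that it survives many phases on average) will be the technically delicate part.
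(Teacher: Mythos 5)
Your proposal is correct and follows essentially the same approach as the paper: maintain $F$ closed under out-edges from $F\cap\bar S$, categorize $\bar S\setminus F_{\bar S}$ into heavy/light, reverse-BFS from the heavy vertices using the known light-vertex out-edges, and terminate either by reaching $t$, by growing $|F|$ by $\geq h$, or by concluding no $(s,t)$-path exists because the predecessor of $t$ would have to be heavy. Your choice $h=\Theta(r^{1/3})$ is the one that actually balances the $\tO(nr/h)$, $\tO(nh)$, and $\tO(n\sqrt{rh})$ terms to $\tO(nr^{2/3})$; the paper's statement ``setting $h=\sqrt{r}$'' for the deterministic case appears to be a transcription slip (it is swapped with the randomized setting), so your parameter choice is the right one.
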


\subsection{Overview of the algorithms} \label{sec:overview}

\Cref{sec:intro:aug-overview} gives an informal overview of the augmentation algorithm already. In this section, we provide more details so that the reader can be convinced about the correctness of the algorithm.

The algorithm for augmentation, denoted as \textsc{Augmentation} algorithm for easy reference, runs in phases and keeps track of a set $F$ of vertices that are reachable from the vertex $s$. Let $F_S$ and $F_{\bar S}$ denote the bipartition of $F$ inside $S$ and $\bar S$, i.e., $F_S = F \cap S$ and $F_{\bar S} = F \cap \bar S$. In each phase, the algorithm will increase the size of $F_S$ by an additive factor of at least $h$ until the algorithm discovers an $(s,t)$-path (or, otherwise, discover there is no such path). Hence, in total, there are at most $\frac{|S|}{h}$ many phases. We now give an overview of how to implement each phase.

Note that, without loss of generality, we can assume that the set $F_S$ contains all vertices that are out-neighbors of vertices in $F_{\bar S}$. This is because whenever a vertex $v\in \bar{S}$ is added to $F_{\bar S}$, we can quickly add all of $v$'s out-neighbors in $S\setminus F_S$ into the set $F_S$ by using Claim \ref{clm:bin-search}. This requires $O(\log r)$ independence queries for each such out-neighbor. Hence, in total, this procedure uses at most $O(n\log r)$ independence queries, since
each $u\in S \setminus F_S$ is added in $F_S$ at most once.

\paragraph{Heavy and light vertices.} Before explaining what the algorithm does in each phase, we introduce the notion of \emph{heavy} and \emph{light} vertices: We divide the vertices in $\bar{S}\setminus F_{\bar S}$ into two categories. We call a vertex $v\in \bar{S}\setminus F_{\bar S}$ \emph{heavy} if it either has an edge to $t$ or has at least $h$ out-neighbors in $S\setminus F_S$. The vertices in $\bar{S}\setminus F_{\bar S}$ that are not \textit{heavy} are denoted as \emph{light} (See \Cref{fig:rev-bfs} for reference; the heavy nodes are highlighted in light-yellow). Note that both these notions are defined in terms of out-degrees, i.e., a heavy vertex can have arbitrary in-degree and so can a light vertex. Also, note that the notion of heavy and light vertices are defined w.r.t. to the set $F_S$. Because the set $F_S$ changes from one phase to the next, so does the set of heavy vertices and light vertices.

\paragraph{Description of phase $i$.} Let us assume, for the time being, that there is an efficient procedure to categorize the vertices in $\bar S \setminus F_{\bar S}$ into the sets of heavy and light vertices. We first apply this procedure at the beginning of phase $i$.

Now, for simplicity, consider an easy case: In phase $i$, there is a \textit{heavy} vertex that has an in-neighbor in $F_S$. In this case, we can go over all vertices in $\bar S \setminus F_{\bar S}$ to find such a heavy vertex---this can be done with $n$ many independence queries. Once we find such a heavy vertex, we include it in $F_S$ and all of its out-neighbors in $F_{\bar S}$. Note that, in this case, either of the following two things can happen: either we have increased the size of $F_S$ by at least $h$ as the heavy vertex has at least $h$ out-neighbors in $S \setminus F_S$; or the heavy vertex we found has $t$ as its out-neighbor in which case we have found an $(s,t)$-path.

Unfortunately, this may not be the case in phase $i$. In this case, we do an additional procedure called the \textit{reverse breadth-first search} or, in short, \textit{reverse BFS}. The goal of the reverse BFS is to find a heavy vertex reachable from $F$.
Before describing this procedure, note the following two properties of the \textit{light} vertices: \begin{enumerate}
    \item \label{itm:pro-2-light} A light vertex will remain a light vertex even if we increase the size of $F_S$.
    
    \item \label{itm:pro-1-light} We can assume that we know all out-neighbors of any light vertex. 
\end{enumerate}
Property \ref{itm:pro-1-light} needs some explanation. This property is true because of two observations: (i) All out-neighbors of a light vertex can be found out with $O(h \log n)$ independence queries using Claim \ref{clm:bin-search}, and (ii) because of Property \ref{itm:pro-2-light}, across all phases, we need to find out the out-neighbors of a light vertex \textit{only once}. So, even though we need to make $O(n h \log n)$ queries in total, this cost amortizes across all phases.

%Now we assume that there is no heavy vertex that is an out-neighbor of $F_S$.
The idea is, as before, to discover a heavy vertex which is reachable from $F$ so that we can include all of its out-neighbors in $F_S$ (for example, consider the heavy vertex $v_1$ in \Cref{fig:rev-bfs}).
% Because no such heavy vertex occurs as an out-neighbor of $F_S$, we need to figure out which light vertices can reach a heavy vertex.
So our goal is to find some path from $F$ to a heavy vertex (Consider the path starting from $v_1$ highlighted in light-blue in \Cref{fig:rev-bfs}). This naturally implies the need for doing a reverse BFS from the heavy vertices.
We also note that any path from $F$ to $t$ must pass through a heavy vertex
(the vertex just preceding $t$ must by definition be heavy).
Hence, if our reverse BFS fails to find a path from $F$ to some heavy vertex,
the algorithm has determined that no $(s,t)$-path exists.
%If a light vertex cannot reach any heavy vertex, we call such vertex as \textit{redundant} and we can discard it from our computation. Hence, if all out-neighbors of $F_S$ are redundant vertices, we know that there is no $(s,t)$-path. So let us assume that all light vertices can reach the set of heavy vertices.

What remains is to find out how to implement the reverse BFS procedure efficiently.
To this end, we exploit Property \ref{itm:pro-1-light} of light vertices and assume that we know all edges directed from $\bar S$ to $S$ that the reverse BFS procedure needs to visit. This follows from the following crucial observation: \textit{No internal node of the reverse BFS forest is a heavy node}, i.e., in other words, the heavy vertices occur \textit{only} as root nodes of the reverse BFS trees. This is because if, along the traversal of a reverse BFS procedure starting from a heavy node $v$, we reach another heavy node $v'$, we can ignore $v'$ as the reverse BFS starting from node $v'$ has already taken care of processing $v'$.
This means that any edge in $\bar S \times S$ that takes part in the reverse BFS procedure must originate from a light vertex and, hence, is known a priori due to Property \ref{itm:pro-1-light}. All it remains for the reverse BFS procedure is to discover in-neighbors of vertices in $\bar S$ using edges from $S \times \bar S$. By Claim \ref{clm:bin-search}, each such in-neighbor can be found by making $O(\log r)$ independence queries. In total, the reverse BFS procedure uses $\tilde O(n)$ independence queries. 

\paragraph{Post-processing.} Note that, in order to use Claim \ref{clm:exg-graph}, the $(s,t)$-path needs to be chordless. However, the $(s,t)$-path $p$ that the algorithm outputs has no such guarantee. So, as a post-processing step, the algorithm uses an additional $\tO(r)$ independence queries to convert this path into a \emph{chordless} path: Consider any vertex $v \in V(p) \cap \bar S$ and assume $u$ as the parent of $v$, and $w$ as the child of $v$ in the path $p$. The vertex $v$ needs to check whether it has an in-neighbor other than $u$ among the ancestors of $v$ in $V(p)$ or an out-neighbor other than $w$ among the descendants of $v$ in $V(p)$. Since the length of the path obtained from the previous step is $O(r)$ (because of $|S| = r$ and the path does not contain any cycle), this requires $O(\log r)$ independence queries. If all vertices in $V(p) \cap \bar S$ have no such in or out-neighbors, then it is easy to see that $p$ is indeed a chordless path. If there is such a (say) in-neighbor $u'$ of $v$, then we remove all vertices of $V(p)$ between $u'$ and $v$, and the resulting subsequence is still an $(s,t)$-path. A similar procedure is done when an out-neighbor is discovered. In total, this takes $O(r\log r)$ independence queries, since each vertex can be removed from the path at most once.

\paragraph{Cost analysis.} The total number of queries needed to implement phase $i$ is a summation of two terms: (i) the number of queries needed to partition the vertices into heavy and light categories, and (ii) the number of queries needed to run the reverse BFS procedure. We have seen that (ii) can be implemented with $\tilde O(n)$ independence queries. For (i), we present two algorithms: a randomized sampling algorithm, and a deterministic algorithm which is slightly less efficient than the randomized one. This is the main technical difference between the algorithm of Theorem \ref{thm:aug} and that of Theorem \ref{thm:aug-det}. The cost analysis for (i) is also amortized and the total number of queries needed across all phases is $\tO(\max\{n h, nr/h\} )$ for randomized and $\tO(n \sqrt{rh})$ for deterministic implementation. Setting $h= \sqrt r$ for randomized and $h = r^{1/3}$ for deterministic, we see that total randomized query complexity of augmentation is $\tO(n \sqrt r)$ and deterministic query complexity is $\tO(n r^{2/3})$. \jan{ I assume we can set $h$ like this because we know an approximation of $r$ by running that other approximate algorihtm first?}

\subsection{Categorizing \emph{heavy} and \emph{light} vertices}\label{sec:heavy-light-small}

We start with reminding the readers the definition of the \textit{heavy} and \textit{light} vertices in $\bar S \setminus F_{\bar S}$.
\begin{definition}\label{def:heavy-light}
We call a vertex $v\in \bar{S}\setminus F_{\bar{S}}$ {heavy} if either
$(v,t)$ is an edge of $G(S)$ or $v$ has at least
$h$ out-neighbors in $S\setminus F_{S}$.
Otherwise we call $v$ {light}.
\end{definition}

To check whether $v$ has an edge to $t$ is easy and requires
only a single independence query: ``Is $S \cup \{v\}$ independent in $\cM_2$?''
The difficulty
lies when this is not the case and we need to determine if
$v$ has outdegree at least $h$ to $S\setminus F_{S}$.
We present two algorithms to solve this categorization problem:
one randomized sampling algorithm; and
a less efficient deterministic algorithm. More concretely, we show the following two lemmas.

\begin{restatable}{lemma}{RandCat}\label{lem:sample}
There is a randomized categorization procedure which, with high probability,
%at least $1-\frac{1}{n^3}$,
categorizes heavy and light vertices in the set $\bar S \setminus F_{\bar S}$ correctly by issuing $O(n\log n)$ independence queries per phase and an additional $O(nh\log n)$ independence queries over the whole run of the \textsc{Augmentation} algorithm. 
\end{restatable}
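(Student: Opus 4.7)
The overall plan is to combine a cheap randomized flag-test with an on-demand verification step, arranging the bookkeeping so that the per-phase work is $O(n\log n)$ and each expensive verification can happen only a bounded number of times over the whole run.

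At the start of each phase I would sample a subset $R \subseteq S\setminus F_S$ by including each element independently with probability $p = c\log n / h$ for a suitably large constant $c$. Using \cref{clm:is-edge}, a single independence query per $v \in \bar S \setminus F_{\bar S}$ tells whether $v$ has any out-neighbor in $R$; call $v$ \emph{flagged} if so. Each truly heavy vertex has at least $h$ out-neighbors in $S\setminus F_S$, so the probability of being unflagged is at most $(1-p)^h \le e^{-ph} = n^{-\Omega(1)}$, and a union bound over all $O(n)$ vertices and $O(n/h)$ phases makes every heavy vertex flagged with high probability. Per phase, this step uses $O(n)$ queries and, together with constructing $R$, fits in the $O(n\log n)$ per-phase budget.

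A flagged vertex still needs verification to rule out false positives (light vertices that happen to hit $R$). I would verify $v$ by repeatedly invoking $\textsc{OutEdge}$ from \cref{clm:bin-search} to extract out-neighbors of $v$ in $S\setminus F_S$ one at a time into a cached list $L_v$, stopping as soon as $|L_v|$ reaches $h$ (declare heavy) or no out-neighbor remains (declare light and freeze $L_v$). Each extraction costs $O(\log n)$, so one full verification costs $O(h\log n)$. The bookkeeping is arranged so that every vertex undergoes a full verification at most once over the whole run: once $v$ has been confirmed light, its cached $L_v$ settles its classification in all later phases with zero queries; once it has been confirmed heavy, the cached $L_v$ keeps it heavy until some of its entries are absorbed into $F_S$, and any subsequent re-extension of $L_v$ is amortized against those insertions, each of which contributes $O(\log n)$ work. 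Summing over all vertices yields the additional $O(nh\log n)$ queries claimed in the lemma.

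The main obstacle is that in a single phase potentially many light vertices could be flagged and trigger $O(h\log n)$ verifications simultaneously, threatening the per-phase budget. The resolution is precisely the amortization above: verifications are charged not to the phase in which they occur but to the vertex being verified, and each vertex contributes at most one full $O(h\log n)$ verification plus incremental work paid for by migrations of its $L_v$ entries into $F_S$. Combined with the high-probability flagging argument --- which ensures heavy vertices are never misclassified as light --- this yields both the correct categorization with high probability and the stated $O(n\log n)$ per-phase and $O(nh\log n)$ global query bounds.
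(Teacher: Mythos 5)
Your flagging step is fine, and in fact simpler than the paper's (a single sample set per phase instead of $O(\log n)$ repeated experiments). The gap is in the verification/amortization step. You stop extracting out-neighbors into $L_v$ once $|L_v|=h$, so for a vertex that is currently heavy you never learn its full out-neighborhood. When entries of $L_v$ later migrate into $F_S$, you must re-extend $L_v$, and you claim this re-extension is ``amortized against those insertions, each of which contributes $O(\log n)$ work.'' But a single insertion $u \in F_S$ can live in the caches $L_v$ of \emph{many} different $v$'s simultaneously, and each such $v$ then needs its own $O(\log n)$ re-extension. The correct accounting charges $O(\log n)$ per pair $(v,u)$ with $u$ ever placed in $L_v$; since a vertex $v$ can cycle up to $|\ngh(v)|$ distinct elements through its cache over the run, this sums to $O\bigl(\log n \cdot \sum_v |\ngh(v)|\bigr) = O(nr\log n)$ in the worst case, not $O(nh\log n)$. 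A concrete bad instance: every $v \in \bar S$ has the same out-neighborhood $\{u_1,\dots,u_r\}$, and the algorithm moves exactly the currently cached block $\{u_{ih+1},\dots,u_{(i+1)h}\}$ into $F_S$ in phase $i$; then every $L_v$ is emptied and refilled every phase, giving $\Theta(nr\log n)$ total work. (This is precisely the obstacle that forces the paper's \emph{deterministic} categorization to use the weight-minimizing replacement rule, which still only yields $O(n\sqrt{rh}\log r)$, worse than $O(nh\log n)$.)

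The paper's randomized proof avoids this by building a \emph{gap} into the test: the repeated experiment distinguishes out-degree $<h$ from out-degree $>10h$ with high probability, so any vertex that fails the ``declared heavy'' test has, w.h.p., at most $10h$ out-neighbors in $X$. The paper then brute-forces the \emph{entire} residual out-neighborhood $\ngh_X(v)$ of such a vertex in $O(h\log n)$ queries. Because $X = S\setminus F_S$ only shrinks over time, once $\ngh_X(v)$ is fully known it remains known in all later phases with zero further queries, so this $O(h\log n)$ cost is incurred at most once per vertex, giving the $O(nh\log n)$ total. Vertices that always pass the heavy test are never brute-forced at all, so their potentially huge out-degree never enters the accounting. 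Your single-threshold flag has no such gap: a flagged vertex may have out-degree anywhere between $1$ and $n$, so neither capping $L_v$ at $h$ (breaks amortization) nor extracting everything (unbounded for high-degree vertices) works. To repair your argument, replace the single flag test by a test with a multiplicative gap (e.g.\ the paper's majority-of-$O(\log n)$-experiments construction), and verify unflagged vertices by learning their \emph{entire} residual out-neighborhood exactly once.
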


\begin{restatable}{lemma}{DetCat}\label{lem:det-cat}
There exists a deterministic categorization procedure which uses
$O(n\sqrt{rh}\log r)$ queries over the whole run of the \textsc{Augmentation} algorithm.
\end{restatable}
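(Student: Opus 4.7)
The plan is to maintain, for every vertex $v \in \bar{S} \setminus F_{\bar{S}}$, a set $N(v) \subseteq S$ of known out-neighbors of $v$ along with a bit $b(v)$ recording whether $v$ has an edge to $t$ (determined by a single independence query via \Cref{clm:is-edge}). The classification rule will be: declare $v$ heavy iff $b(v)=1$ or $|N(v) \cap (S \setminus F_{S})| \ge h$; otherwise declare $v$ light, provided the algorithm has certified via a failed $\textsc{OutEdge}$ call that no further out-neighbors of $v$ exist in $S \setminus F_{S} \setminus N(v)$. The monotonicity of light vertices (once certified light, $v$ remains light forever, as $F_S$ only grows and $v$'s neighborhood is fixed) ensures we never need to re-examine such~$v$ after its certification.

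First, I would initialize with a single pass: for each $v \in \bar{S}$, issue up to $h$ calls to $\textsc{OutEdge}(v, S \setminus N(v))$ to populate $N(v)$, and set $b(v)$ with one extra independence query. This stage costs $O(h \log r)$ queries per vertex, for a total of $O(nh \log r)$ queries, which is dominated by the target $O(n\sqrt{rh}\log r)$ whenever $h \le r$.

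Across the subsequent phases of the \textsc{Augmentation} algorithm, as $F_S$ grows and elements of $N(v)$ get absorbed into $F_S$, the effective count $|N(v) \cap (S \setminus F_S)|$ may drop below $h$. The \emph{careful picking} strategy is to respond with large-batch rebuilds of size $\Theta(\sqrt{rh})$ rather than one element at a time: whenever a still-heavy $v$ drops below threshold, I would issue additional $\textsc{OutEdge}(v, S \setminus F_S \setminus N(v))$ calls to bring $|N(v) \cap (S \setminus F_S)|$ back up to $\Theta(\sqrt{rh})$, or else exhaust $v$'s remaining out-neighborhood in $S \setminus F_S$ and certify it light. Each such rebuild costs $O(\sqrt{rh}\log r)$ queries and, crucially, must witness at least $\Theta(\sqrt{rh})$ fresh elements of $N(v)$ having entered $F_S$ since the previous rebuild, which will be the basis of the amortization.

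The main obstacle is precisely this amortized analysis. A naive charging of each rebuild of $v$ to $\Theta(\sqrt{rh})$ distinct elements of $F_S^{\mathrm{final}}$ that are out-neighbors of $v$, summed over $v$, is too weak: the total number of pairs $(v,u)$ with $v \in \bar{S}$, $u \in F_S^{\mathrm{final}}$, and $(v,u)$ an edge of the exchange graph can be as large as $nr$ in the worst case, which would only yield an unsatisfactory $O(nr \log r)$ bound. To obtain the desired $O(n\sqrt{rh}\log r)$, the amortization must exploit the structure of how $F_S$ evolves under the \textsc{Augmentation} algorithm --- in particular, that each phase adds to $F_S$ the entire out-neighborhood in $S \setminus F_S$ of a single heavy vertex, so the elements entering $F_S$ arrive in highly correlated batches rather than adversarially. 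This correlation, together with a careful deterministic rule for which specific elements of $S \setminus F_S$ are placed into $N(v)$ during each rebuild, is what must be leveraged to bound the total work across all $v$ by $O(n\sqrt{rh}\log r)$. Formalizing this correlation argument is where the technical difficulty of the proof lies.
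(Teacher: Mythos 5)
Your proposal has a genuine gap, and you already sense where it lies: the amortization. The batch-rebuild scheme you describe (top $N(v)$ up to $\Theta(\sqrt{rh})$ whenever it drops, charge each rebuild to $\Theta(\sqrt{rh})$ consumed elements) gives only $O(nr\log r)$ in the worst case, as you correctly compute, and the ``correlation'' you then appeal to is a red herring. The phases of \textsc{Augmentation} can add the out-neighborhood of an adversarially chosen heavy vertex to $F_S$, and nothing prevents that batch from simultaneously depleting the $N(v)$'s of essentially all $v\in\bar{S}$; the fact that the depleted elements are the \emph{same} across many $v$'s does not reduce the rebuild cost, since each $v$ still needs its own fresh out-neighbors discovered by separate queries. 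The paper's amortization is in fact completely oblivious to how $F_S$ evolves, so a correlation argument is not what is needed.

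The idea you are missing is a \emph{selection rule}, not a batching rule. The paper maintains only up to $h$ (not $\sqrt{rh}$) out-neighbors $N_v\subseteq\ngh_X(v)$ per vertex and replenishes them one at a time, but it chooses \emph{which} replacement out-neighbor to add. Define $w(u)$ to be the number of sets $N_v$ currently containing $u$; whenever $v$ needs a new out-neighbor, pick $u\in\ngh_X(v)\setminus N_v$ minimizing $w(u)$, which the binary search of \cref{clm:bin-search} supports if $X$ is pre-sorted by weight. The charging is then: if $w(u)\le n\sqrt{h}/\sqrt{r}$ at the time $u$ is added, charge the $O(\log r)$ cost to $u$; otherwise charge it to $v$. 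Each $u\in S$ can be charged at most $O(n\sqrt{h}/\sqrt{r})$ times, since $w(u)$ is non-decreasing while $u\in X$ and each charge increments it. Each $v\in\bar{S}$ is charged at most $O(\sqrt{rh})$ times: once $v$ receives its first ``expensive'' $u$, the minimality of the selection rule forces all remaining candidates in $\ngh_X(v)\setminus N_v$ to have weight above the threshold, and since $\sum_u w(u)=\sum_v|N_v|\le nh$ there can be at most $\sqrt{rh}$ such heavy-weight elements at any moment. Summing, $r\cdot O\!\left(\frac{n\sqrt{h}}{\sqrt{r}}\log r\right)+n\cdot O(\sqrt{rh}\log r)=O(n\sqrt{rh}\log r)$. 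Without the weight-minimizing rule, no such charging is available, and the bound cannot be recovered by the route you sketch.
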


\noindent
The proofs of these two lemmas are deferred to Section \ref{sec:heavy-light}.

\subsection{Heavy vertex reachability} \label{sec:rev-bfs}
In this section, we present the reverse BFS in \cref{alg:bfs} and
analyze some properties of it. Recall that the reverse BFS
is run once in each phase of the algorithm to find some vertex in
$F$ which can reach some heavy vertex.
We also remind the reader of the example in \Cref{fig:rev-bfs}.
In this section, we prove the following.

\begin{lemma}[Heavy vertex reachability] \label{lem:bfs}
There is an algorithm (\cref{alg:bfs}: \textsc{ReverseBFS}) which, given $F$ such that there are no edges from $F_{\bar{S}}$ to $S\setminus F_{S}$,
a categorization of $\bar{S}\setminus F_{\bar{S}}$ into \emph{heavy}
and \emph{light}, and all out-edges of the light vertices to $S\setminus F_{S}$, uses $O(n\log r)$ queries and either finds a path
from some vertex in $F$ to a heavy node, or otherwise
determines that no such path exists.
\end{lemma}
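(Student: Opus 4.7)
I would run a reverse BFS that is simultaneously rooted at every heavy vertex of $\bar{S}\setminus F_{\bar{S}}$, alternating between layers in $\bar{S}$ and in $S$, and maintaining a forward-successor pointer $\sigma(\cdot)$ for every non-root vertex enqueued. The search will terminate successfully the first time it discovers a vertex $u\in F_S$ (by convention $s\in F_S$ too); at that point, walking forward along $\sigma$ from $u$ yields the required path into a heavy root. If the queue empties before this happens, the algorithm will output that no such path exists.

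\textbf{Processing a queued vertex.} For a popped $x\in\bar{S}$, I would discover the in-neighbors of $x$ in $S\cup\{s\}$ by iterated binary search: keep a global pool $X\subseteq S\cup\{s\}$ of vertices not yet visited, and call $\textsc{InEdge}(x,X)$ from Claim~\ref{clm:bin-search} until it reports none. Each success returns a fresh $u$, which is removed from $X$, assigned $\sigma(u)\gets x$, and then either triggers termination (if $u\in F_S$) or is enqueued. For a popped $x\in S$, I need in-neighbors in $\bar{S}$; the lemma's assumption that there are no edges from $F_{\bar{S}}$ to $S\setminus F_S$, combined with the fact that the BFS never enqueues a vertex of $F_S$, forces every relevant in-neighbor into $\bar{S}\setminus F_{\bar{S}}$. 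Heavy candidates are already roots and need not be revisited, while the light candidates can be read off directly from the pre-computed adjacency — no oracle queries at all for this branch.

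\textbf{Correctness.} Soundness is immediate: the algorithm only declares failure after exhausting every possible extension of its BFS frontier via the operations above. For completeness I would fix any path $F\ni w=x_0\to x_1\to\cdots\to x_\ell$ with $x_\ell$ heavy and induct from the tail. Each intermediate $x_i\in\bar{S}$ falls into one of three cases: it is in $F_{\bar{S}}$, which forces $x_{i+1}\in F_S$ by the no-edge assumption so termination has already fired at step $i+1$; it is heavy, so $x_i$ is itself a root and the BFS explores its in-neighbors directly (yielding a possibly shorter path to a heavy vertex); or it is light, so the edge $x_i\to x_{i+1}$ with $x_{i+1}\in S\setminus F_S$ lies in the pre-computed adjacency and the reverse BFS will therefore traverse it from $x_{i+1}$ back to $x_i$. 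In every scenario the BFS discovers the entire tail of the path, reaches some $u\in F_S$, and returns the forward walk along $\sigma$.

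\textbf{Complexity.} Oracle queries are consumed only in the $\bar{S}$ branch. Across the whole run, each successful $\textsc{InEdge}$ call permanently removes one vertex from $X$, so there are at most $|S\cup\{s\}|\le n$ successful calls in total; each unsuccessful call can be charged to the $\bar{S}$-vertex being processed, of which there are also at most $n$. Since every call costs $O(\log r)$ queries by Claim~\ref{clm:bin-search}, the overall cost is $O(n\log r)$, matching the claim. The main subtlety I expect is the completeness argument — one must check that skipping heavy vertices encountered as in-neighbors during BFS is harmless, which works precisely because those heavy vertices are themselves roots and hence have their in-neighborhoods explored from their own BFS origin.
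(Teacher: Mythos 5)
Your proposal is correct and follows essentially the same approach as the paper: a reverse BFS seeded at all heavy vertices, using binary-search \textsc{InEdge} calls for in-neighbors of popped $\bar S$-vertices and the pre-computed light-vertex adjacency for in-neighbors of popped $S$-vertices, with the observation that heavy vertices never appear as internal nodes of the BFS forest. The only cosmetic differences are that you terminate upon \emph{discovering} an $F$-vertex rather than upon popping it, and you track path recovery via explicit successor pointers rather than via the implicit BFS-forest; the cost accounting (at most $n$ successful plus at most $n$ unsuccessful \textsc{InEdge} calls, each $O(\log r)$ queries) matches the paper's.
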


\noindent
We next provide the pseudo-code (Algorithm \ref{alg:bfs}). 

\begin{center}
  \centering
  \begin{minipage}[H]{0.8\textwidth}
\begin{algorithm}[H]
\caption{\textsc{ReverseBFS}}\label{alg:bfs}
\begin{algorithmic}[1]
\Statex \textbf{Input:} Categorization of $\bar{S}\setminus F_{\bar{S}}$ into
\emph{heavy} and \emph{light}; and a set
\LightEdges{} containing all out-edges of the light vertices.
\Statex \textbf{Output:} A path from $F$ to some heavy vertex, if one exists.
\Statex \hrulefill
\State $Q \gets \{v\in \bar{S}\setminus F_{\bar{S}} \text{ which are heavy}\}$
\State $\NotVisited \gets (S\cup \bar{S}\cup\{s,t\})\setminus Q$
\While{$Q \neq \emptyset$}
    \State Pop a vertex $v$ from $Q$.
    \If{$v\in F$}
        \State \Return the path from $v$ to a heavy vertex in the BFS-forest.
    \ElsIf{$v\in \bar{S}\setminus F_{\bar{S}}$}
        \While{$u = \textsc{InEdge}(v,\NotVisited)$ is not $\emptyset$}
        \label{lst:line:bfs-inedge}
            \State Push $u$ to $Q$ and remove it from $\NotVisited$.
        \EndWhile
    \ElsIf{$v\in S\setminus F_S$}
        \For{$u\in \NotVisited$ such that $(u,v)\in \LightEdges$}
            \State Push $u$ to $Q$ and remove it from $\NotVisited$.
        \EndFor
    \EndIf
\EndWhile
\State \Return ``NO PATH EXISTS''
\end{algorithmic}
\end{algorithm}
\end{minipage}
\end{center}

\paragraph{Correctness.}
We first argue that the algorithm is correct.
When a vertex $v\in \bar{S}\setminus F_{\bar{S}}$ is processed by the algorithm,
each unvisited in-neighbor will be added to the queue $Q$ in the while loop
in line~\ref{lst:line:bfs-inedge}.
When a vertex $v\in S\setminus F_{S}$ is processed by the algorithm,
any edge from $\NotVisited$ to $v$ must originate from a light vertex,
since $\NotVisited$ contains no heavy vertices and we are guaranteed
that no edge from $F_{\bar{S}}$ to $S\setminus F_S$ exist.
Hence \cref{alg:bfs} will eventually process every vertex
reachable, by traversing edges in reverse, from the heavy vertices.

\paragraph{Cost analysis.}
The only place \cref{alg:bfs} uses independence queries is in line~\ref{lst:line:bfs-inedge}.
Each vertex will be discovered at most once by the binary search in $\textsc{InEdge}$.
This means that we do at most $n$ calls to $\textsc{InEdge}$, each using $O(\log r)$ queries by \cref{clm:bin-search}. Hence the reverse BFS uses $O(n\log r)$ queries per
phase.

\subsection{Augmenting path algorithm.}

We now present the main augmenting path algorithm, as explained in
the overview in \cref{sec:overview}.

\begin{center}
  \centering
  \begin{minipage}[H]{0.8\textwidth}
\begin{algorithm}[H]
\caption{\textsc{Augmentation}}\label{alg:augmentation}
\begin{algorithmic}[1]
\Statex \textbf{Input:} Two matroids $\cM_1 = (V,\cI_1)$,
and $\cM_2 = (V,\cI_2)$ and a common independent set $S\subseteq \cI_1 \cap \cI_2$.
\Statex \textbf{Output:} An augmenting $(s,t)$-path in $G(S)$ if one exists.
\Statex \hrulefill

\State $F\gets \{s\}$
\State $\LightEdges \gets \emptyset$

\Statex

\While{$t \not\in F$}
\begin{mdframed}[backgroundcolor=blue!20,skipabove=5pt,skipbelow=5pt]
      \centering
      {\em Description of a phase.}
    \end{mdframed}
    \State Categorize $v\in \bar{S}\setminus F_{\bar{S}}$ into \emph{heavy}
    and \emph{light}. \Comment{See \cref{sec:rand,sec:det}.}
    \For{each new \emph{light} vertex $v$} 
    \State Use \textsc{OutEdge} to find all out-neighbors of $v$ in $S\setminus F_S$.
    \label{lst:line:light-edge}
    \State Add edges $(v,u)$ to $\LightEdges$ for each such out-neighbor $u$.
    \EndFor
    \Statex
    \State $p \gets \textsc{ReverseBFS}(S,F,\LightEdges)$ \Comment{See Section \ref{sec:rev-bfs}.}
    \If{$p = $ ``NO PATH EXISTS''}
        \State \Return ``NO PATH EXISTS''
    \Else \Comment{$p$ is a path from $F$ to a heavy vertex}
        \State Denote by $V(p)$ the vertices on the path $p$.
        \State Add all $v\in V(p)$ to $F$.
        \For{$v \in V(p)\cap \bar{S}$}
            \While{$u = \textsc{OutEdge}(v,(S\setminus F)\cup\{t\})$ is not $\emptyset$}
                \label{lst:line:uw-edge}
                \State Add $u$ to $F$.
            \EndWhile
        \EndFor
    \EndIf
\EndWhile

\begin{mdframed}[backgroundcolor=blue!20,skipabove=5pt,skipbelow=5pt]
      \centering
      {\em Post-processing.}
    \end{mdframed}
\State Post-process the $(s,t)$-path found to make it \emph{chordless}.
\State \Return the augmenting path.

\end{algorithmic}
\end{algorithm}
\end{minipage}
\end{center}

Note that we have not specified if we are using
the randomized or
deterministic categorization of heavy and light vertices,
from \cref{sec:rand,sec:det}.
We will for now assume this categorization procedure as a black box which
is always correct. 

We start by stating some invariants of \cref{alg:augmentation}.
\begin{enumerate}
\item \label{itm:inv-F} $F$ contains only vertices reachable from $s$. In fact, for each vertex in $F$
we have found a path from $s$ to this vertex.
\item $\LightEdges$ contains all out-edges from light vertices to $S\setminus F$.
\item \label{itm:inv-Fclosure} In the beginning of each phase, there exists no $v\in F$, $u\in (S\cup\{s,t\})\setminus F$
      such that $(v,u)$ is an edge in $G(S)$. This is because whenever
      $v\in \bar{S}$ is added to $F$, all $v$'s neighbors are also added,
      see line~\ref{lst:line:uw-edge}.
\end{enumerate}

\paragraph{Correctness.}
When the algorithm outputs an $(s,t)$-path, the path clearly exists,
by Invariant~\ref{itm:inv-F}.
So it suffices to argue that the algorithm does not return ``NO PATH EXISTS'' incorrectly.
Note that the algorithm only returns ``NO PATH EXISTS'' when
\textsc{ReverseBFS} does so, that is when there is no
path from $F$ to a heavy vertex (by \cref{lem:bfs}).
So suppose that this is that case, and also suppose, for the sake of a contradiction,
that an $(s,t)$-path $p$ exists in $G(S)$.
Denote by $v$ the vertex preceding $t$ in the path $p$.
By Invariant~\ref{itm:inv-Fclosure} we know that $v$ is not in $F$.
But then $v$ is heavy, since $(v,t)$ is an edge of $G(S)$.
Hence a subpath of $p$ will be a path from $F$ to the heavy vertex $v$,
which is the desired contradiction.

%This can never happen by \cref{clm:redundant} which says it is safe to discard the \emph{redundant} vertices in line~\ref{lst:line:discard}.

\paragraph{Number of phases.}
We argue that there are at most $\frac{r}{h}+1$ phases of the algorithm.
After a phase, either the algorithm returns ``NO PATH EXISTS'' (in which case
this was the last phase), or some path $p$ was found
by the reverse BFS. Then $V(p)$ must include some heavy vertex $v$.
Then all neighbors of $v$ will be added to $F$ in line~\ref{lst:line:uw-edge}.
Thus we know that either $t$ was added to $F$ (in which case this was
the last phase), or at least $h$ vertices from $S$ was added to $F$.
Since $|S| \le r$ in the beginning, this can happen at most
$\frac{r}{h}$ times.

\paragraph{Number of queries.}
We analyse the number of independence queries used by different parts of the algorithm:
\begin{itemize}
    \item \textsc{ReverseBFS} (\cref{alg:bfs})
    is run once each phase, and uses $O(n\log r)$
    queries per call by \cref{lem:bfs}.
    This contributes a total of $O(\frac{nr\log r}{h})$ independence queries
    over all phases.
    
    \item Each $u\in S$ is discovered at most once by the
    \textsc{OutEdge} call on line~\ref{lst:line:uw-edge}.
    So this line contributes a total of $O(n\log r)$ independence queries.
    
    \item Each vertex becomes \emph{light} at most once over the run of the algorithm.
    When this happens, the algorithm 
    finds all of its (up to $h$) out-neighbors on line~\ref{lst:line:light-edge},
    using \textsc{OutEdge} calls.
    This contributes a total of $O(nh\log r)$ independence queries.
    
    \item The post-processing  can be performed using $O(r\log r)$ independence
    queries, as explained in \cref{sec:overview}.
    
    \item The \emph{heavy}/\emph{light}-categorization
    uses $O(nh\log n + \frac{nr\log n}{h})$ independence queries when the randomized
    procedure is used, by \cref{lem:sample}.
    When the deterministic categorization procedure is used, 
    we use $O(n\sqrt{rh}\log r)$ independence queries instead,
    by \cref{lem:det-cat}.
\end{itemize}
We see that in total, the algorithm uses:
\begin{itemize}
    \item $O(n\sqrt{r}\log n)$ independence queries with the randomized
    categorization, setting $h = r^{1/3}$.
    \item $O(nr^{2/3}\log r)$ independence queries with the deterministic
    categorization, setting $h = \sqrt{r}$.
\end{itemize}
The above analysis proves \cref{thm:aug,thm:aug-det}.

\begin{remark}
When the randomized categorization procedure fails, \cref{alg:augmentation} will
still always return the correct answer, but it might use more independence queries.
So \cref{alg:augmentation} is in fact a Las-Vegas algorithm with expected query-complexity
$O(n\sqrt{r}\log n)$.
\end{remark}
\begin{remark}
\label{rem:reachable}
We note that our algorithm can not be used to find which vertices are reachable from $s$ using subquadratic number of queries.
\end{remark}

\section{Algorithm for fast Matroid Intersection} \label{sec:main-algo}

There are two hurdles to getting a subquadratic algorithm for Matroid Intersection.
Firstly, standard augmenting path algorithms need to find the augmenting paths one at
a time. This is since after augmenting along a path, the edges in the exchange graph
change (some edges are added, some removed).
This is unlike bipartite matching, where a set of vertex-disjoint augmenting
paths can be augmented along in parallel. It is not clear how to find
the augmenting paths faster than $\Theta(n)$ each, so these standard
augmenting path algorithms are stuck at $\Omega(nr)$ independence queries.

To overcome this,  Chakrabarty-Lee-Sidford-Singla-Wong \cite[Section~6]{chakrabarty2019faster} introduce the notion of
\emph{augmenting sets}, which allows multiple parallel augmentations.
Using the augmenting sets they present a
subquadratic $(1-\eps)$-approximation algorithm using
$\tO(\frac{n^{1.5}}{\eps^{1.5}})$ independence queries:

\begin{lemma}[Approximation algorithm \cite{chakrabarty2019faster}]
\label{lem:approx}
There exists an $(1-\eps)$ approximation algorithm for matroid intersection
using $O(\frac{n\sqrt{n\log r}}{\eps\sqrt{\eps}})$ independence queries.
\end{lemma}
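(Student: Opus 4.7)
The plan is to follow the \emph{augmenting set} paradigm introduced by \cite{chakrabarty2019faster}, which adapts the blocking-flow ideas of Hopcroft--Karp (for bipartite matching) and Dinic (for max-flow) to the matroid intersection setting. The basic observation is Claim~\ref{clm:dists}: once the shortest $(s,t)$-distance in the current exchange graph $G(S)$ exceeds some $d$, we automatically have $|S| \ge (1-O(1/d))r$. So the whole algorithm can be phrased as: keep enlarging $S$ until the $(s,t)$-distance becomes $\Omega(1/\eps)$, and then return $S$. The approximation guarantee then follows directly from Claim~\ref{clm:dists} with $d = \Theta(1/\eps)$, without ever needing to find the last $\eps r$ augmenting paths.

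To realize this efficiently, I would run $O(1/\eps)$ phases where each phase increases the shortest $(s,t)$-distance in $G(S)$ by at least one. In each phase I would first compute the BFS level structure of $G(S)$ from $s$ up to the current $(s,t)$-distance $d$, and then find a \emph{maximal augmenting set} in this layered graph: morally, a family of vertex-disjoint chordless $(s,t)$-paths of length exactly $d$ along which $S$ can be simultaneously updated. Augmenting along this family plays the role of a blocking flow, and a symmetric-exchange argument (as in \cite{cunningham1986improved,chakrabarty2019faster}) shows that $|S|$ grows by the size of the family and that the shortest $(s,t)$-distance strictly increases afterwards. Thus after $O(1/\eps)$ phases we reach the termination condition.

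The hard part will be (i) giving the right notion of ``augmenting set'' so that several augmentations can be applied in parallel without destroying independence in either matroid, and (ii) implementing each phase in subquadratic query complexity. For (i), I would use the definition of \cite[Section~6]{chakrabarty2019faster}, whose correctness is proved via matroid symmetric-exchange lemmas. For (ii), the key trick is that the layered exchange graph only needs to be explored one level at a time, using the binary-search edge-discovery of Claim~\ref{clm:bin-search} at $O(\log r)$ queries per discovered edge, and that within a single phase one can afford to explore at most $\tilde O(n\sqrt{n/\eps})$ edges before either finding a maximal augmenting set or certifying one does not exist; a Hopcroft--Karp style potential argument (bounding the number of phases and balancing phase cost against the remaining distance) yields this bound.

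Combining the two counts, each phase costs $\tilde O(n\sqrt{n/\eps})$ independence queries, and there are $O(1/\eps)$ phases, giving the claimed $O\bigl(\frac{n\sqrt{n\log r}}{\eps\sqrt{\eps}}\bigr)$ total. Since the statement is exactly the theorem of \cite[Section~6]{chakrabarty2019faster}, I would not re-prove the symmetric-exchange machinery from scratch but rather cite their Lemma, contenting myself with the sketch above to explain why the bound is believable and where the $\sqrt{\eps}$ factor comes from (it is the $\sqrt{d}$ factor in a Hopcroft--Karp analysis with $d = \Theta(1/\eps)$).
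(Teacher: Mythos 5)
Both you and the paper treat this lemma the same way: it is a direct citation of the approximation algorithm of \cite[Section~6]{chakrabarty2019faster}, and neither re-derives it. Your sketch of the mechanism (augmenting sets as the matroid analogue of blocking flow, terminate once the $(s,t)$-distance exceeds $\Theta(1/\eps)$, invoke \cref{clm:dists}) correctly captures the high-level structure of that algorithm, and your accounting $O(1/\eps)$ phases $\times\, \tilde O(n\sqrt{n/\eps})$ per phase does reproduce the stated bound, though the precise per-phase charging in CLS+ is more delicate than a straight Hopcroft--Karp transcription. Since you ultimately rely on the cited result rather than your own re-derivation, this is essentially the same as what the paper does, and the looseness in your phase-cost heuristic is harmless.
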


The second hurdle is that when the distance $d$ between $s$ and $t$ is high, the
breadth-first algorithms of \cite{chakrabarty2019faster,nguyen2019note} use $\tilde{\Theta}(dn)$ independence queries
to compute the distance layers, which is $\Omega(nr)$ when $d \approx r$.\footnote{Note that unlike in \cref{sec:augmentation}, we now use the normal definition of $r$ as the size of the maximum-cardinality common independent set of the two matroids.}
Here our algorithm from \cref{sec:augmentation} helps since it can find a single
augmenting path using a subquadratic number of independence queries, even when the distance
$d$ is large.

So our idea is as follows:
\begin{itemize}
    \item Start by using the subquadratic approximation algorithm. This avoids having to do the majority of augmentations one by one.
    \item Continue with the fast implementation \cite[Section~5]{chakrabarty2019faster} of the Cunningham-style blocking flow algorithm.
    \item When the $(s,t)$-distance becomes too large, fall back to using the augmenting-path algorithm from \cref{sec:augmentation} to find the (few) remaining augmenting paths.
\end{itemize}

\begin{center}
  \centering
  \begin{minipage}[H]{0.8\textwidth}
\begin{algorithm}[H]
\caption{subquadratic Matroid Intersection}\label{alg:subquadratic}
\begin{algorithmic}[1]
\State
       Run the approximation algorithm (\cref{lem:approx})
       with $\eps = n^{1/5}r^{-2/5}\log^{-1/5} r$
       to obtain a common independent set $S$ of size at least
       $(1-\eps)r = r-n^{1/5}r^{3/5}\log^{-1/5} r$.\label{lst:line:approx}
\Statex
\State Starting with $S$, run Cunningham's algorithm (as implemented by
       \cite{chakrabarty2019faster}), until the distance between $s$ and $t$
       becomes larger than $d$.
       \label{lst:line:cunningham}
\Statex
\State Keep running \textsc{Augmentation} (\cref{alg:augmentation}) from \cref{sec:augmentation} and augmenting the current common independent set with the obtained $(s,t)$-path (as in Claim \ref{clm:exg-graph}) until no $(s,t)$-path can be found in the exchange graph.
       \label{lst:line:augmenting-paths}
\end{algorithmic}
\end{algorithm}
\end{minipage}
\end{center}

The choice of $d$ will be different depending on whether we use
the randomized or deterministic version of \cref{alg:augmentation}.
In order to run \cref{alg:subquadratic},
we need to know $r$ so that we may choose $\eps$ (and $d$) appropriately.
However, the size $r$ of the largest common independent set is unknown.
We note that it suffices, for the purpose of the asymptotic analysis, to use a $\frac{1}{2}$-approximation $\bar{r}$
for $r$ (that is $\bar{r} \le r \le 2\bar{r}$). It is well known
that such an $\bar{r}$ can be found
in $O(n)$ independence queries by greedily finding a maximal
common independent set in the two matroids.
Now we can bound the query complexity of  \cref{alg:subquadratic}.

\begin{lemma} \label{lem:line-1}
Line~\ref{lst:line:approx} of \cref{alg:subquadratic} uses $O(n^{6/5}r^{3/5}\log^{4/5} r)$ independence queries.
\end{lemma}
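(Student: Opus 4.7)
The plan is a direct substitution. By \cref{lem:approx}, running the approximation algorithm with parameter $\eps$ uses $O\!\left(\frac{n\sqrt{n\log r}}{\eps^{3/2}}\right)$ independence queries. Line~\ref{lst:line:approx} of \cref{alg:subquadratic} calls this with the specific choice $\eps = n^{1/5} r^{-2/5} \log^{-1/5} r$, so the bound is obtained by just plugging in and collecting exponents. (Since the true rank $r$ is not known in advance, we first spend $O(n)$ queries to compute a $\tfrac12$-approximation $\bar r$ and use that in place of $r$; this changes the resulting bound only by a constant factor, so it does not affect the asymptotics.)

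Concretely, I would compute
\[
\eps^{-3/2} \;=\; \bigl(n^{1/5} r^{-2/5} \log^{-1/5} r\bigr)^{-3/2} \;=\; n^{-3/10}\, r^{3/5}\, \log^{3/10} r ,
\]
and then multiply by $n\sqrt{n\log r} = n^{3/2}\log^{1/2} r$ to get
\[
\frac{n\sqrt{n\log r}}{\eps^{3/2}} \;=\; n^{3/2-3/10}\, r^{3/5}\, \log^{1/2+3/10} r \;=\; n^{6/5}\, r^{3/5}\, \log^{4/5} r .
\]
This matches the claimed bound $O(n^{6/5} r^{3/5} \log^{4/5} r)$.

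There is no real obstacle here beyond careful exponent arithmetic: the lemma is just the statement that the specific $\eps$ chosen in \cref{alg:subquadratic} balances two terms in the subsequent analysis (the cost of approximation versus the cost of the remaining augmentations), and this line of the proof only records the first half of that balance.
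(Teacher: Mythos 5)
Your proposal is correct and is essentially identical to the paper's proof: both simply substitute $\eps = n^{1/5} r^{-2/5} \log^{-1/5} r$ into the $O(n\sqrt{n\log r}/\eps^{3/2})$ bound from \cref{lem:approx} and simplify the exponents. The paper states this in one line; you just show the exponent arithmetic explicitly.
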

\begin{proof}
The approximation algorithm uses $O(\frac{n^{1.5}\sqrt{\log r}}{\eps^{1.5}}) = O(n^{6/5}r^{3/5}\log^{4/5} r)$ independence queries, when $\eps = n^{1/5} r^{-2/5}\log^{-1/5} r$.
\end{proof}

\begin{lemma} \label{lem:line-2}
Line~\ref{lst:line:cunningham} of \cref{alg:subquadratic} uses $O(n^{6/5}r^{3/5}\log^{4/5}r + nd\log r)$ independence queries.
\end{lemma}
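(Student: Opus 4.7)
The strategy is to invoke the cost analysis of the Cunningham-style blocking-flow algorithm as implemented in \cite[Section 5]{chakrabarty2019faster}: starting from any common independent set $S$ and running it until the $(s,t)$-distance in the exchange graph exceeds a threshold $d$, the implementation uses $\tilde{O}(nT+nd)$ independence queries in total, where $T$ is the number of augmentations it performs. The $\tilde{O}(n)$-per-augmentation contribution comes from traversing a shortest augmenting path with the binary-search edge-discovery routines of \cref{clm:bin-search}, while the $\tilde{O}(nd)$ term accounts for the incremental maintenance of the BFS layers as the distance grows from $1$ up to at most $d+1$.

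To bound $T$, observe that every augmentation in Line~\ref{lst:line:cunningham} increases $|S|$ by exactly one, and $|S|$ is always at most $r$. By the guarantee of Line~\ref{lst:line:approx} (the approximation algorithm of \cref{lem:approx} used with $\eps = n^{1/5}r^{-2/5}\log^{-1/5} r$), the common independent set entering Line~\ref{lst:line:cunningham} has size at least $(1-\eps)r$. Therefore the number of augmentations performed in Line~\ref{lst:line:cunningham} satisfies
\[
T \;\le\; r - (1-\eps)r \;=\; \eps r \;=\; n^{1/5}\, r^{3/5}\, \log^{-1/5} r.
\]

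Plugging this into the CLSSW cost bound gives
\[
\tilde{O}(nT + nd) \;=\; \tilde{O}\bigl(n\cdot n^{1/5} r^{3/5}\log^{-1/5} r\bigr) + \tilde{O}(nd),
\]
and absorbing the hidden $\log r$ factor into the first summand yields the claimed bound $O(n^{6/5}r^{3/5}\log^{4/5} r + nd\log r)$.

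The main obstacle is not the arithmetic, which is immediate once $T$ is bounded, but rather justifying the additive decomposition $\tilde{O}(nT+nd)$ for the CLSSW implementation: one needs that the BFS maintenance amortizes to $\tilde{O}(nd)$ instead of the naive $\tilde{O}(nd^2)$ one gets by running an independent BFS at each of the $\Theta(d)$ Cunningham phases. This is precisely the amortized layer-reconstruction argument given in \cite[Section 5]{chakrabarty2019faster}, which I would simply invoke as a black box; the bookkeeping of the $\eps$ value from Line~\ref{lst:line:approx} through the powers of $n$, $r$, and $\log r$ is then a straightforward substitution.
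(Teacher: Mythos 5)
Your proof is correct and takes essentially the same approach as the paper: both decompose the cost of the CLSSW implementation of Cunningham's algorithm into an $\tilde O(nd)$ term for incrementally maintaining the BFS distance layers (using monotonicity of distances) and an $\tilde O(n)$-per-augmentation term for finding each shortest augmenting path, and both bound the number of augmentations by $\eps r = n^{1/5}r^{3/5}\log^{-1/5}r$ using the guarantee of Line~\ref{lst:line:approx}.
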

\begin{proof}
There are two main parts of Cunningham's blocking-flow algorithm.
\begin{itemize}
    \item Computing the distances.
    The algorithm will run several BFS's to compute the distances.
    The total number of independence queries for all of these BFS's can be bounded
    by $O(d n \log r)$, since the distances are monotonic so each vertex is
    tried at a specific distance at most once.
    For more details, see \cite[Section~5.1]{chakrabarty2019faster}.
    
    \item Finding the augmenting paths.
    Given the distance-layers, a single augmenting path can be found in $O(n\log r)$
    independence queries, by a simple depth-first-search. Again,
    we refer to \cite[Section~5.2]{chakrabarty2019faster} for more details.
    Since we start with a common independent set $S$
    of size $(1-\eps)r = r-n^{1/5}r^{3/5}\log^{-1/5} r$,
    we know that $S$ can be augmented at most $n^{1/5}r^{3/5}\log^{-1/5}r$ additional times. Hence a total of
    $O(n^{6/5}r^{3/5}\log^{4/5} r)$ independence queries suffices to find all of these augmenting paths.
\end{itemize}
\end{proof}

\begin{remark}
We note that if we skip Line~\ref{lst:line:augmenting-paths} in \cref{alg:subquadratic},
we thus get a $(1-\frac{1}{d})$-approximation algorithm
(by \cref{clm:dists}),
using $\tO(n^{6/5}r^{3/5} + nd)$ independence queries,
which beats the $\tO(n^{1.5}/\eps^{1.5})$ approximation algorithm
when $\eps = o(n^{1/5}r^{-2/5})$.
\end{remark}

\begin{lemma} \label{lem:line-3}
Line~\ref{lst:line:augmenting-paths} of \cref{alg:subquadratic} uses $O(\frac{r}{d}\mathcal{T})$ independence queries, where $\mathcal{T}$ is the number of independence queries
used by one invocation of \textsc{Augmentation} (\cref{alg:augmentation}).
\end{lemma}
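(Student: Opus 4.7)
The plan is to bound the number of augmentations performed in Line~\ref{lst:line:augmenting-paths} and then multiply by the per-call cost $\mathcal{T}$ of \textsc{Augmentation}. The key observation is that when Line~\ref{lst:line:augmenting-paths} begins, we have already exited the Cunningham-style loop in Line~\ref{lst:line:cunningham}, which by its termination condition guarantees that the $(s,t)$-distance in the exchange graph $G(S')$ of the current common independent set $S'$ is at least $d$.

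First, I would invoke \cref{clm:dists} (Cunningham's distance-to-size bound), which states that if the shortest $(s,t)$-path in $G(S')$ has length at least $d$, then $|S'| \ge (1 - O(1/d))r$. Hence $r - |S'| = O(r/d)$. Since each iteration of Line~\ref{lst:line:augmenting-paths} augments the current common independent set by exactly one element (by \cref{clm:exg-graph} applied to the chordless augmenting path returned by \textsc{Augmentation}), the number of iterations before \textsc{Augmentation} reports ``NO PATH EXISTS'' is at most $r - |S'| = O(r/d)$.

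Next, I would argue that each iteration costs $\mathcal{T}$ independence queries by definition, since a single invocation of \textsc{Augmentation} (\cref{alg:augmentation}) either returns an augmenting path or reports that none exists, using $\mathcal{T}$ queries in either case. Multiplying the number of iterations by the per-iteration cost yields the desired $O(\tfrac{r}{d}\mathcal{T})$ bound.

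The only mild subtlety, and the step worth being careful about, is that after each augmentation the exchange graph changes, so one might worry that the $(s,t)$-distance could drop below $d$ and invalidate the bound from \cref{clm:dists}. This is not an issue for the counting argument: we only use \cref{clm:dists} once, at the moment Line~\ref{lst:line:augmenting-paths} starts, to bound the total remaining number of augmentations by $r - |S'|$; subsequent iterations simply close this gap by one each time, regardless of how the distance evolves. This completes the plan.
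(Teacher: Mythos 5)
Your proof is correct and follows the same route as the paper: invoke \cref{clm:dists} at the moment Line~\ref{lst:line:augmenting-paths} begins to bound the remaining gap $r - |S'|$ by $O(r/d)$, and then multiply by the per-call cost $\mathcal{T}$. The subtlety you flag (distance possibly dropping after an augmentation) is correctly dismissed; the paper leaves this implicit but the counting argument is identical.
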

\begin{proof}
After line~\ref{lst:line:cunningham}, the algorithm has found a common independent set of size at least $(1-O(\frac{1}{d}))r = r - O(\frac{r}{d})$, by \cref{clm:dists}.
This means that only $O(\frac{r}{d})$ additional augmentations need to be performed.
%\TODO{Write how do we get $r$ to set $h$ accordingly.}
\end{proof}

By \cref{lem:line-1,lem:line-2,lem:line-3}, we see that \cref{alg:subquadratic}
uses a total of
$O(n^{6/5}r^{3/5}\log^{4/5} r + nd\log r + \frac{r}{d}\mathcal{T})$ independence queries.
If we pick $d = \sqrt{\frac{r\mathcal{T}}{n\log r}}$ we get the following lemma.

\begin{lemma} \label{thm:mi-given-aug}
  If the query complexity of \textsc{Augmentation} is $\cal T$, then matroid intersection can be solved using $O(n^{6/5}r^{3/5}\log^{4/5} r + \sqrt{nr\mathcal{T}\log r})$
  independence queries.\danupon{TO DO: Say something about deterministic/randomized like intro.}
\end{lemma}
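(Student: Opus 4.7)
\textbf{Proof plan for Lemma \ref{thm:mi-given-aug}.} The plan is to simply combine the three per-line bounds already established in Lemmas \ref{lem:line-1}, \ref{lem:line-2}, and \ref{lem:line-3}, and then optimize the threshold distance $d$. Adding the three bounds gives that \cref{alg:subquadratic} makes
\[
  O\!\left(n^{6/5}r^{3/5}\log^{4/5} r \;+\; n d \log r \;+\; \frac{r}{d}\,\mathcal{T}\right)
\]
independence queries, where the first term comes from the approximation phase and the ``finding augmenting paths'' part of the Cunningham-style phase, the second term comes from computing BFS layers up to distance $d$, and the third term comes from the remaining $O(r/d)$ augmentations done via \textsc{Augmentation}.

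Next, I would optimize over $d$ by balancing the two terms that depend on it. Setting $nd\log r \asymp (r/d)\mathcal{T}$ yields $d = \sqrt{r\mathcal{T}/(n\log r)}$, at which point each of those two terms becomes $\sqrt{nr\mathcal{T}\log r}$. Substituting this choice back into the total bound gives
\[
  O\!\left(n^{6/5}r^{3/5}\log^{4/5} r \;+\; \sqrt{n r \mathcal{T}\log r}\right),
\]
which is exactly the claimed complexity.

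\textbf{Sanity checks and minor obstacles.} There is essentially no hard step here; the only issues are bookkeeping. First, the optimal $d$ might not be an integer: one takes $d = \max\{1,\lceil\sqrt{r\mathcal{T}/(n\log r)}\rceil\}$, which only changes constants. Second, in order to pick $\eps$ in line~\ref{lst:line:approx} and $d$ in line~\ref{lst:line:cunningham}, the algorithm needs an estimate of $r$; as noted in the text just before the lemma, a greedy $\tfrac12$-approximation $\bar r$ of $r$ can be obtained using $O(n)$ queries, and plugging $\bar r$ in place of $r$ only affects the hidden constants in the asymptotic bound. Finally, determinism/randomization is preserved by the construction: all three phases are deterministic except possibly the invocation of \textsc{Augmentation} in line~\ref{lst:line:augmenting-paths}, so if \textsc{Augmentation} is deterministic the whole algorithm is deterministic, matching the remark the author flags in the margin. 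With these three small observations in place, the lemma follows immediately from Lemmas \ref{lem:line-1}--\ref{lem:line-3}.
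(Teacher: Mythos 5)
Your proof is correct and follows exactly the paper's own argument: sum the bounds from Lemmas \ref{lem:line-1}, \ref{lem:line-2}, and \ref{lem:line-3}, obtaining $O(n^{6/5}r^{3/5}\log^{4/5} r + nd\log r + \frac{r}{d}\mathcal{T})$, and then balance the latter two terms by choosing $d = \sqrt{r\mathcal{T}/(n\log r)}$. Your additional bookkeeping remarks (integrality of $d$, using the greedy $\tfrac12$-approximation $\bar r$ of $r$, and preservation of determinism) are consistent with the surrounding discussion in the paper.
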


Combining with \cref{thm:aug-det,thm:aug} we get our subquadratic results.

\begin{theorem} [Randomized Matroid Intersection]
\label{thm:mi-rand}
There is a randomized algorithm which with high probability
uses $O(n^{6/5}r^{3/5}\log^{4/5} r)$ independence queries and
solves the matroid intersection problem.
When $r = \Theta(n)$, this is $\tO(n^{9/5})$.
\end{theorem}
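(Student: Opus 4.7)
The plan is to specialize Lemma \ref{thm:mi-given-aug} using the randomized augmentation procedure from Theorem \ref{thm:aug}, which supplies $\mathcal{T} = O(n\sqrt{r}\log n)$ independence queries per invocation with high probability. Plugging this into the bound $O(n^{6/5}r^{3/5}\log^{4/5}r + \sqrt{nr\mathcal{T}\log r})$ of Lemma \ref{thm:mi-given-aug}, the second summand becomes
\[ \sqrt{nr\cdot n\sqrt{r}\log n\cdot \log r} \;=\; O\!\left(nr^{3/4}\sqrt{\log n\log r}\right). \]

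Next I would verify that the first term dominates throughout the allowed range $1 \le r \le n$. Comparing $n^{6/5}r^{3/5}$ with $nr^{3/4}$ amounts to checking $n^{1/5} \ge r^{3/20}$, equivalently $r \le n^{4/3}$, which is automatic since $r \le n$. Hence $nr^{3/4}\sqrt{\log n\log r}$ is absorbed into $O(n^{6/5}r^{3/5}\log^{4/5}r)$, giving the stated total. The $\tO(n^{9/5})$ corollary is then the $r = \Theta(n)$ specialization.

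Finally, for the high-probability guarantee, observe that Algorithm \ref{alg:subquadratic} invokes the randomized \textsc{Augmentation} at most $O(r/d) = \poly(n)$ times in Line \ref{lst:line:augmenting-paths} (the approximation step of \cite{chakrabarty2019faster} and the Cunningham-style blocking-flow stage are deterministic once their inputs are fixed). Since Theorem \ref{thm:aug} allows each individual call to succeed with probability $1 - 1/n^{c}$ for an arbitrarily large constant $c$, a standard union bound over the polynomially many invocations yields overall success with high probability. I do not anticipate a serious obstacle: the proof is essentially arithmetic, with the only subtlety being the verification that the $nr^{3/4}$ term is dominated by the first, which as noted above reduces to the trivial inequality $r \le n^{4/3}$.
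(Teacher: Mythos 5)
Your derivation matches the paper's: it instantiates Lemma~\ref{thm:mi-given-aug} with $\mathcal{T} = O(n\sqrt{r}\log n)$ from Theorem~\ref{thm:aug}, computes $\sqrt{nr\mathcal{T}\log r} = O(nr^{3/4}\sqrt{\log n\log r})$, checks it is dominated by the first term for $r \le n$, and handles the high-probability claim via a union bound over the polynomially many invocations of \textsc{Augmentation}, exactly as the paper does in Section~\ref{sec:rand}. The only minor imprecision is that the dominance comparison tacitly drops the logarithmic factors ($\log^{4/5}r$ versus $\sqrt{\log n\log r}$), but since the polynomial gap is $n^{1/5}/r^{3/20} \ge n^{1/20}$ for $r \le n$, the conclusion still holds.
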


\begin{theorem}[Deterministic Matroid Intersection]
\label{thm:mi-det}
There is a deterministic algorithm which
uses $O(nr^{5/6} \log r + n^{6/5}r^{3/5}\log^{4/5} r)$ independence queries and
solves the matroid intersection problem.
When $r = \Theta(n)$, this is $\tO(n^{11/6})$.
\end{theorem}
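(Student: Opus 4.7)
The plan is to instantiate \cref{thm:mi-given-aug} with the deterministic augmentation subroutine from \cref{thm:aug-det}. Since \cref{thm:mi-given-aug} expresses the overall query complexity of \cref{alg:subquadratic} purely in terms of the query cost $\mathcal{T}$ of the subroutine \textsc{Augmentation}, and since the deterministic \textsc{Augmentation} from \cref{thm:aug-det} gives $\mathcal{T} = O(nr^{2/3}\log r)$, the proof reduces to a one-line substitution followed by comparing the two resulting terms.

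Concretely, I would first observe that \cref{alg:subquadratic} is deterministic whenever every subroutine it invokes is deterministic: the approximation algorithm of \cref{lem:approx} is deterministic, Cunningham's blocking-flow implementation of \cite{chakrabarty2019faster} used in line~\ref{lst:line:cunningham} is deterministic, and by \cref{thm:aug-det} the invocation of \textsc{Augmentation} on line~\ref{lst:line:augmenting-paths} can be made deterministic with query cost $\mathcal{T} = O(nr^{2/3}\log r)$. Hence the whole algorithm is deterministic.

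Next I would substitute this value of $\mathcal{T}$ into the bound from \cref{thm:mi-given-aug}, namely
\[
O\!\left(n^{6/5}r^{3/5}\log^{4/5} r + \sqrt{nr\mathcal{T}\log r}\right).
\]
The second term becomes
\[
\sqrt{nr \cdot nr^{2/3}\log r \cdot \log r} \;=\; \sqrt{n^{2}r^{5/3}\log^{2} r} \;=\; n r^{5/6}\log r,
\]
which yields the claimed total of $O(nr^{5/6}\log r + n^{6/5}r^{3/5}\log^{4/5} r)$ independence queries. When $r=\Theta(n)$, the first term becomes $\tO(n^{11/6})$ and the second $\tO(n^{9/5})$; since $11/6 > 9/5$, the overall complexity simplifies to $\tO(n^{11/6})$.

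There is no real obstacle here: the combinatorial content is already packaged inside \cref{thm:aug-det} (which handles the deterministic heavy/light categorization) and \cref{thm:mi-given-aug} (which handles the approximation + blocking-flow + augmenting-path combination). The only thing to be a little careful about is the choice of the threshold parameter $d$ inside \cref{alg:subquadratic}: one sets $d=\sqrt{r\mathcal{T}/(n\log r)}=\Theta(r^{5/6}/\sqrt{\log r})$ for this deterministic instantiation, exactly as in the derivation of \cref{thm:mi-given-aug}, so that the costs of line~\ref{lst:line:cunningham} and line~\ref{lst:line:augmenting-paths} are balanced. Since $r$ is not known in advance, one uses a $\tfrac{1}{2}$-approximation $\bar r$ of $r$ (obtainable deterministically in $O(n)$ queries via a greedy maximal common independent set, as noted after \cref{lem:line-3}) to choose $\eps$ and $d$; this only affects constants in the asymptotic bound.
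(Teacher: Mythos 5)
Your proposal matches the paper's own derivation: the paper also obtains \cref{thm:mi-det} by plugging the $\mathcal{T} = O(nr^{2/3}\log r)$ bound from \cref{thm:aug-det} into \cref{thm:mi-given-aug}, and your substitution $\sqrt{nr\cdot nr^{2/3}\log r\cdot\log r} = nr^{5/6}\log r$ and comparison of exponents $11/6$ vs.\ $9/5$ at $r=\Theta(n)$ are both correct. The extra remarks about determinism of each subroutine and the choice of $d$ and $\bar r$ are consistent with the paper's discussion around \cref{alg:subquadratic}.
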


\begin{remark}
The limiting term for the the randomized algorithm is between line~\ref{lst:line:approx} and line~\ref{lst:line:cunningham}.
If a faster approximation algorithm is found, the same strategy as above might give
an $\tO(nr^{3/4})$-query algorithm.
\end{remark}

\section{Algorithm for heavy/light categorization} \label{sec:heavy-light}

In this section, we finally provide the algorithm for the categorization of vertices in $\bar S \setminus F_{\bar S}$ into heavy and light vertices as defined in Definition \ref{def:heavy-light}.

\subsection{Randomized categorization} \label{sec:rand}

In this section, we prove the following lemma (restated from Section \ref{sec:heavy-light-small}).

\RandCat*
% \begin{lemma} \label{lem:sample}
% There is a randomized categorization procedure which, with probability at least $1-\frac{1}{n^3}$, categorizes heavy and light vertices in the set $\bar S \setminus F_{\bar S}$ correctly by issuing $O(n\log n)$ independence queries per phase and an additional $O(nh\log n)$ independence queries over the whole run of the \textsc{Augmentation} algorithm. \TODO{Write better theorem}
% \end{lemma}

We will use $X$ to denote $S\setminus F$. Let the out-neighborhood of a vertex $v \in \bar S \setminus F_{\bar S}$ inside $X$ be denoted as $\ngh_X(v)$. Consider the family of sets $\{\ngh_X(v)\}_{v \in \bar S \setminus F_{\bar S}}$ residing inside the ambient universe $X$. We want to find out which of these sets are of size at least $h$ (i.e., correspond to the heavy vertices) and which of them are not (i.e., corresponds to the light vertices). To this end, we devise the following random experiment.

\begin{experiment} \label{exp}
Sample a set $R$ of $k$ elements drawn uniformly and independently from $X$ (with replacement) and check whether $R \cap \ngh_X(v) = \emptyset$. 
\end{experiment}

It is easy to check the following: For any $v \in \bar S \setminus F_{\bar S}$, \cref{exp} is successful with probability:
\[
\Pr_R[R \cap \ngh_X(v) = \emptyset] = \left(1 - \frac{|\ngh_X(v)|}{|X|}\right)^k.
\]

Note that, to perform this experiment for a vertex $v$, we need to make a single independence query of the form whether $(S \setminus R) \cup \{v\} \in \cI_2$. Next, we make the following claim.

\begin{claim}\label{clm:exp-succ}
There is a non-negative integer $k$ such that the following holds: \begin{enumerate}
    \item  If $|\ngh_X(v)| < h$, then \cref{exp} succeeds with probability at least 3/4, and
    
    \item \label{itm:exp-case-2} If $|\ngh_X(v)| > 10h$, then \cref{exp} succeeds with probability at most 1/4.
\end{enumerate}
\end{claim}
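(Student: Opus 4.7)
The success probability
\[
p(v) \;=\; \Pr_R[R \cap \ngh_X(v) = \emptyset] \;=\; \Bigl(1-\tfrac{|\ngh_X(v)|}{|X|}\Bigr)^k
\]
is monotonically decreasing in $|\ngh_X(v)|$. Hence the plan is to choose a single value of $k$ that simultaneously makes $p(v) \ge 3/4$ at the threshold $|\ngh_X(v)| = h$ (which, by monotonicity, handles all light vertices) and makes $p(v) \le 1/4$ at the threshold $|\ngh_X(v)| = 10h$ (handling all vertices with neighborhood exceeding $10h$).

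Concretely, I will verify that any integer $k$ in the interval
\[
\Bigl[\,\tfrac{|X|\ln 4}{10h},\ \tfrac{|X|}{4h}\,\Bigr]
\]
works. For the upper part of the interval I plan to invoke Bernoulli's inequality $(1-x)^k \ge 1-kx$, giving
\[
\bigl(1-\tfrac{h}{|X|}\bigr)^k \;\ge\; 1 - \tfrac{kh}{|X|} \;\ge\; \tfrac{3}{4}
\]
provided $k \le |X|/(4h)$. For the lower part I plan to use $1-x \le e^{-x}$, obtaining
\[
\bigl(1-\tfrac{10h}{|X|}\bigr)^k \;\le\; e^{-10kh/|X|} \;\le\; \tfrac{1}{4}
\]
provided $k \ge |X|\ln 4 / (10h)$. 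The interval has length $\bigl(\tfrac{1}{4} - \tfrac{\ln 4}{10}\bigr)\tfrac{|X|}{h} > \tfrac{1}{10}\cdot\tfrac{|X|}{h}$, so as long as $|X| \ge 10h$ it certainly contains a non-negative integer.

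The remaining edge case is $|X| < 10h$, where case~(\ref{itm:exp-case-2}) is vacuous because no vertex can have more than $|X| < 10h$ out-neighbors in $X$; here taking $k = 0$ (so that the experiment trivially succeeds with probability $1$) satisfies both conditions. I do not foresee any technical obstacle in the argument; the only thing to be careful about is that $k$ must be a non-negative integer, which is why I verify that the feasible interval is wide enough to contain one (or handle the small-$|X|$ case separately).
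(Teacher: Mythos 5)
Your proof is correct, and it follows the same high-level structure as the paper's: handle the degenerate case $|X| < 10h$ by taking $k=0$ (the paper uses $|X|\le 10h$), and otherwise argue that the set of valid integers $k$ is nonempty. The technical execution differs, though, in a way worth noting. The paper picks $k = \bigl\lceil \log(1/4)/\log(1-10x)\bigr\rceil$ with $x=h/|X|$ and then asserts the inequality $\frac{\log(1/4)}{\log(1-10x)} + 1 < \frac{\log(3/4)}{\log(1-x)}$ for $x\in(0,1/10)$, which holds but is left for the reader to verify. You instead sandwich $(1-x)^k$ between the elementary bounds $1-kx \le (1-x)^k \le e^{-kx}$, which yields the explicit interval $\bigl[\tfrac{|X|\ln 4}{10h},\ \tfrac{|X|}{4h}\bigr]$ of valid $k$'s; showing the interval has width greater than $1$ (for $|X|\ge 10h$) is then a one-line numerical check ($1/4 - \ln 4/10 > 0.1$). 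Your version is a bit more self-contained and avoids the unproven logarithmic inequality, while the paper's gives an explicit formula for $k$; both buy the same result. One minor point of care, which you implicitly got right: the lower endpoint of your interval is strictly positive, so the integer $k$ you extract is automatically a \emph{non-negative} integer as the claim requires.
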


Before proving Claim \ref{clm:exp-succ}, we show the rest of the steps of this procedure. For every vertex, we repeat \cref{exp} $s = O(\log n)$ many times independently. By standard concentration bound, we make the following observations: \begin{enumerate}
    \item \label{itm:exp-1} If $|\ngh_X(v)| < h$, strictly more than $s/2$ experiments succeed with very high probability.\footnote{Recall that by \emph{very high probability} we mean with probability at least $1 - n^{-c}$ for some arbitrary large constant $c$.}
    
    \item \label{itm:exp-2} If $|\ngh_X(v)| > 10h$, strictly less than $s/2$ experiments succeed with very high probability.
\end{enumerate}

Hence, we declare any vertex for which strictly less than $s/2$ experiments succeed as \textit{heavy}. The probability that a light vertex can be classified as heavy by this procedure is very small due to Property \ref{itm:exp-1}. On the other hand, a vertex with $|\ngh_X(v)| > 10h$ will be correctly classified as heavy with a very high probability. However, a heavy vertex with $|\ngh_X(v)| \leq 10h$ may not be correctly classified. So, for such vertices, we want to check in a brute-force manner. To this end, we discover the set $\ngh_X(v)$ for any vertex $v$ which is not declared heavy and make decisions accordingly. 

\paragraph{Bounding the error probability.} 
We argue that we can bound the error probabilities from
Properties~\ref{itm:exp-1} and~\ref{itm:exp-2}
over the whole run of the \textsc{Augmentation} algorithm by a union bound.
Say that 
the error probabilities of Properties~\ref{itm:exp-1} and~\ref{itm:exp-2}
is bounded by $n^{-c}$ for some large constant $c \ge 10$.
In each phase we categorize at most $n$ vertices,
and there is at most $\frac{r}{h} < n$ phases.
Hence, the probability that --- over the whole run of
\textsc{Augmentation} (\cref{alg:augmentation}) ---
that any vertex is misclassified as heavy,
or that the procedure decides to discover a set $\ngh_X(v)$ with
$\ngh_X(v)> 10$, is at most $n^{-c+2}$.
Similarly we note that in the algorithm for Matroid Intersection
(\cref{alg:subquadratic}) we run \textsc{Augmentation} at most
$r$ times, so the error probability is at most $n^{-c+3}$.

\paragraph{Cost analysis.} As mentioned before, each instance of \cref{exp} can be performed with a single query. As there are $O(n \log n)$ experiments in total in each phase of the algorithm, the number of queries needed to perform all
experiments 
over the whole run of the \textsc{Augmentation} algorithm will be
is $O(\frac{nr\log n}{h})$ (recall that the number of phases is $r/h$). Now consider the part of the algorithm where we need to discover the set $\ngh_X(v)$ for any vertex $v$ which is not declared heavy after the completion of all experiments in a phase. For each such vertex, this will take at most $O(|\ngh_X(v)|\log n) = O(h\log n)$ queries (due to Claim \ref{clm:bin-search}). Note that we only need to make these kinds of queries from each vertex once over the whole run of the algorithm (as in future queries we already know all $v$'s neighbors and can answer directly). Hence, the total number of such queries is at most $O(nh\log n)$ across all phases of the algorithm.

\begin{proof}[Proof of Claim \ref{clm:exp-succ}]
First we note that if $|X| \le 10h$, case~\ref{itm:exp-case-2} is vacuously true, so we may pick $k = 0$ such that
\cref{exp} always succeeds. So now assume that $|X| > 10h$
and let $x = \frac{h}{|X|}\in (0,\frac{1}{10})$. We want to show
that there exists some positive integer $k$ satisfying
$(1-x)^k \ge \frac{3}{4}$ and
$(1-10x)^{k} \le \frac{1}{4}$.
Pick $k = \left\lceil\frac{\log{\frac{1}{4}}}{\log(1-10x)} \right\rceil$.
Then $k \ge \frac{\log{\frac{1}{4}}}{\log(1-10x)} > 0$,
which means that $(1-10x)^k\le \frac{1}{4}$.
We also have that
$k \le \frac{\log{\frac{1}{4}}}{\log(1-10x)} + 1
< \frac{\log {\frac{3}{4}}}{\log(1-x)}$ (since $x\in (0,\frac{1}{10})$),
which means that $(1-x)^k\ge \frac{3}{4}$.
\end{proof}

\subsection{Deterministic categorization} \label{sec:det}
In this section we prove the following lemma (restated from Section \ref{sec:heavy-light-small}).

\DetCat*
% \begin{lemma} \label{lem:det-cat}
% There exists a deterministic categorization procedure which uses
% $O(n\sqrt{rh}\log r)$ queries over the whole run of the \textsc{Augmentation} algorithm.
% \end{lemma}

The main idea of the determinsitic categorization is the following: For each  $v\in \bar{S}\setminus F_{\bar{S}}$, our deterministic
categorization keeps track of a set $N_v\subseteq \ngh_X(v)$ of $h$
out-neighbors to $v$ (if that many out-neighbors exist).
Then we can either use $N_v$ as a proof that $v$ is heavy,
or when we failed to find such a $N_v$ we know that $v$ is light.

In each phase, some of the vertices in $N_v$ may be added to $F$ (and
thus removed from $X$).
This may decrease the size of $N_v$. In this case we would like
to find additional out-neighbors to add to $N_v$, until $|N_v| = h$,
or determine that $|\ngh_X(v)| < h$. One possible and immediate strategy would be to use \cref{clm:bin-search} to find a new out-neighbor of $v$ in $O(\log n)$ independence queries.
However, adding arbitrary neighbors from $\ngh_X(v)\setminus N_v$ will be expensive:
over the whole run of the algorithm potentially every vertex in $S$ will be added to $N_v$ at some point which will require $\tO(nr)$ many independence queries in total for all $N_v$'s---this is far too expensive than what we can allow. Instead, we want to be device a better strategy to pick $u\in \ngh_X(v)\setminus N_v$.

\paragraph{Determinisitc strategy.} For $u\in X$ we will denote by the \emph{weight} of $u$, or $w(u)$,
the number of sets $N_v$ which contain $u$. Note that these weights
change over the run of the algorithm. Also, note that the values $w(u)$ can be inferred from the sets $N_v$'s which are known to the querier. Hence, we can assume that the querier knows the weights of elements in $X$. When $u$ is moved from $X$ to $F$, $w(u)$ new out-neighbors must be found,
one for each $v\in\bar{S}\setminus F_{\bar{S}}$
for which the set $N_v$ contained $u$. 

This motivates the following strategy:
\emph{Whenever we need to find a new out-neighbors of $v$,
we find $u\in \ngh_X(v)\setminus N_v$ that minimizes $w(u)$}.
To perform this strategy, we note that the binary-search
idea from \cref{clm:bin-search} can be implemented to find a
$u$ which minimizes $w(u)$.
Indeed, if $\{u_1, u_2, \ldots u_{|X|}\}\subseteq X$ with
$w(u_1) \le w(u_2) \le \ldots \le w(u_{|X|})$, the binary search
can first ask if there is an edge to $\{u_1, \ldots, u_{\floor{|X|/2}}\}$ with
a single query. If this was the case we recurse on
$\{u_1, u_2, \ldots u_{\floor{|X|/2}}\}$, otherwise  recurse
on  $\{u_{\floor{|X|/2}+1}, \ldots, u_{|X|} \}$. This will guarantee that a
the $u_i$ which minimizes $w(u_i)$ will be found.\footnote{We actually use the same strategy to initialize the sets $N_v$: We discover out-neighbors $u$ in the increasing order of $w(u)$.}

\paragraph{Cost Analysis.}
For each $v\in \bar{S}$ we will at most
once determine that $N_v$ cannot be extended, i.e. that $|\ngh_X(v)| < h$. This will
require $O(n)$ independence queries in total.
The remaining cost we will amortize over the vertices in $V = S\cup \bar{S}$.
Consider that we find some out-neighbor $u\in X$ to some vertex $v\in \bar{S}$, using the above strategy. This uses $O(\log r)$ independence queries.
We will charge this cost to $u$ if $w(u) \le \frac{n\sqrt{h}}{\sqrt{r}}$,
otherwise we will charge the cost to $v$. We make the following observations:
\begin{enumerate}
    \item \label{itm:charge-1}
    For $u\in S$, the total cost we charge to it at most $O(\frac{n\sqrt{h}}{\sqrt{r}}\log r)$.
    
    \item \label{itm:charge-2}
    For $v\in \bar{S}$, the total cost we charge to it is at most $O(\sqrt{rh}\log r)$.
\end{enumerate}
Property~\ref{itm:charge-1} is easy to see, since we charge the cost
$O(\log r)$ to it at most $O(\frac{n\sqrt{h}}{\sqrt{r}})$ times.
To argue that Property~\ref{itm:charge-2} holds, let $u\in S$ be the first
vertex which got added to $N_v$ which had weight $w(u)$ strictly more than $\frac{n\sqrt{h}}{\sqrt{r}}$
(at the moment it was added to $N_v$).
At this point in time, we know that for all remaining $u'\in \ngh_X(v)\setminus N_v$, must have $w(u') \ge w(u) > \frac{n\sqrt{h}}{\sqrt{r}}$. Note that we can bound the total weight
$\sum_{u\in X} w(u) = \sum_{v\in \bar S\setminus F_{\bar{S}}} |N_v| \le nh$
at any point in time. Because of this upper bound, there can be at most $\frac{nh}{n\sqrt{h}/\sqrt{r}} = \sqrt{rh}$ such $u'$. Hence we can charge vertex $v$
at most $\sqrt{rh}$ more times.

Since there are at most $r$ vertices $u\in S$ and
$n$ vertices $v\in \bar{S}$, we conclude that the total cost (over all phases)
for the deterministic categorization is $O(n\sqrt{rh}\log r)$.
This proves \cref{lem:det-cat}.

\section{Open Problems}
A major open problem is to close the big gap between upper and lower bounds for the matroid intersection problem with independent and rank queries. A major step towards this goal is to prove an $n^{1+\Omega(1)}$ lower bound. It will already be extremely interesting to prove  such a bound for deterministic algorithms. It is also interesting to prove a $cn$ lower bound for randomized algorithms for some constant $c>1$ (the existing lower bound \cite{Harvey08} holds only for deterministic algorithms). 

% Submodular?

Another major open problem is to understand whether the rank query is more powerful than the independence query. Are the tight bounds the same under both query-models? Two important intermediate steps towards answering this question is to achieve an $\tilde O(n\sqrt{r})$-query exact algorithm and an $\tilde O(n/\poly(\epsilon))$-query  $(1-\epsilon)$-approximation algorithm under independence queries (such bounds have already been achieved under rank queries \cite{chakrabarty2019faster}). We conjecture that the tight bounds are $\tilde O(n\sqrt{n})$ under both queries when $r=\Omega(n)$.

We believe that fully understanding the complexity of the reachability problem will be another major step towards understanding the matroid intersection problem. We conjecture that our $\tilde O(n\sqrt{n})$ bound is tight for $r=\Omega(n)$. 
%and further conjecture that the tight bound for matroid intersection is also $\tilde O(n\sqrt{n})$. 

It is also very interesting to break the quadratic barrier for the weighted case. 
%
%Finally, can we break the quadratic barrier for the weighted matroid intersection problem? 
This barrier can be broken by a $(1-\epsilon)$-approximation algorithm by combining techniques from \cite{chakrabarty2019faster,ChekuriQ16}\footnote{From a private communication with Kent Quanrud}, but not the exact one. 

Related problems are those for minimizing submodular functions. Proving an $n^{1+\Omega(1)}$ lower bound or subquadratic upper bound for, e.g., finding the minimizer of a submodular function or the non-trivial minimizer of a symmetric submodular function. Many recent studies (e.g. \cite{RubinsteinSW18,GraurPRW20,LeeLSZ20,MukhopadhyayN20}) have led to some non-trivial bounds. However, it is still open whether an $n^{1+\Omega(1)}$ lower bound or an $n^{2-\Omega(1)}$ upper bound exist even in the special cases of computing minimum $st$-cut and hypergraph mincut in the cut query model.  

\section*{Acknowledgment}
This project has received funding from the European Research Council (ERC) under the European
Unions Horizon 2020 research and innovation programme under grant agreement No 715672. Jan van den Brand is partially supported by the Google PhD Fellowship Program. Danupon
Nanongkai and Sagnik Mukhopadhyay are also partially supported by the Swedish Research Council (Reg.~No. 2019-05622). 
%~2015-04659 and 

\bibliography{biblio}

\end{document}